\newcommand{\Mod}[1]{\ (\mathrm{mod}\ #1)}
\DeclarePairedDelimiter\floor{\lfloor}{\rfloor}
\newtheorem{theorem}{Theorem}[section]
\newtheorem{lemma}[theorem]{Lemma}
\newcommand{\defl}{%
  \ifmmode%
    \Lambda%
  \else%
    $\Lambda$\xspace%
  \fi%
}
\newcommand{\defb}{%
  \ifmmode%
    \Psi%
  \else%
    $\Psi$\xspace%
  \fi%
}
\newcommand{\defc}{%
  \ifmmode%
    \Xi%
  \else%
    $\Xi$\xspace%
  \fi%
}
\begin{document}
\date{}

\title{Swing: Short-cutting Rings for Higher Bandwidth Allreduce}

\author{
{\rm Daniele De Sensi}\\
Sapienza University of Rome
\and
{\rm Tommaso Bonato}\\
ETH Zurich
\and
{\rm David Saam}\\
RWTH Aachen University
\and
{\rm Torsten Hoefler}\\
ETH Zurich
} %

\maketitle

\begin{abstract}
The allreduce collective operation accounts for a significant fraction of the runtime 
of workloads running on distributed systems. One factor determining
its performance is the number of hops between communicating nodes, especially on
networks like torus, where a higher number of hops implies multiple messages
being forwarded on the same link, thus reducing the allreduce bandwidth.
Torus networks are widely used on systems optimized for machine learning 
workloads (e.g., Google TPUs and Amazon Trainium devices), as well as on
some of the Top500 supercomputers.
To improve allreduce performance on torus networks we introduce \textit{Swing}, a new 
algorithm that reduces the number of hops between 
communicating nodes by \textit{swinging} between torus directions. 
Our analysis and experimental evaluation show that Swing 
outperforms by up to $3x$ existing allreduce algorithms for vectors ranging
from 32B to 128MiB, on different types of torus and torus-like 
topologies, regardless of their shape and size.
\end{abstract}

\section{Introduction and Motivation}\label{sec:intro}
Allreduce is a collective operation used to aggregate vectors among a set of nodes and to distribute the aggregated result back to them. Allreduce is widely used to perform distributed gradient aggregation when training deep learning models~\cite{bennun2018demystifying}. Studies have shown that it can account for up to 40\% of the total training time~\cite{285119,9138924,265065} and between 19\% and 30\% of the total core hours in MPI jobs running on production supercomputers~\cite{8665758}.

In this work, we design a new allreduce algorithm optimized for torus-like networks. Torus networks are widely used, both on systems optimized for running machine learning (ML) workloads and on some of the top supercomputers~\cite{1592896,bwssss} (e.g., Fugaku uses a 6D torus~\cite{8514929}). Although torus networks are characterized by a lower bisection and global bandwidth compared to other topologies (e.g., Clos), their simplicity and lower cost allow running some workloads such as ML training in a more cost-effective way, since their communication is often arranged as a 3D logical torus~\cite{10.5555/3571885.3571899}.

Seen from a different perspective, torus networks trade off a lower cost for a lower bisection bandwidth, which, however, is enough to train most ML models efficiently~\cite{10.5555/3571885.3571899}. This is the reason why many systems optimized for ML training rely on torus-like topologies. These include, for example, Google TPUs~\cite{jouppi2023tpu} (a TPU v5p pod connects $\sim\num{9000}$ chips on a 3D torus~\cite{tpuv5}), Amazon Trainium devices~\cite{awstranium} (16 chip on a 2D torus), Graphcore IPU-POD~\cite{graphcore} (connecting 64 chips on a 2D torus), Enflame~\cite{enflame} (2D torus).

Researchers proposed several allreduce algorithms~\cite{10.1177/1094342005051521,10.1145/1088149.1088183,10.1145/1810085.1810093}, and the most performing one depends on a combination of vector size, number of nodes, and physical topology~\cite{10.1007/978-3-540-39924-7_38,hoefler-moor-collectives}. Those algorithms perform a predefined number of steps and, at each step, each node
sends and receives data to and from some predetermined nodes. Different trade-offs exist between the number of 
steps to perform (more critical for allreduce on small
vectors) and the total number of bytes it transmits
(more relevant for larger allreduce). However, a third factor that must 
be considered when designing a new collective algorithm is the number of hops 
between communicating nodes~\cite{10.1177/1094342005051521,10.1016/j.jpdc.2008.09.002,10.1145/2686882,10.1007/978-3-540-39924-7_38}. This is particularly relevant on networks that do not provide full bisection bandwidth such as torus, since the higher the number of hops, the higher the number of flows sharing the same links. 

\begin{figure}[h]
    \centering
    \includegraphics[width=\linewidth]{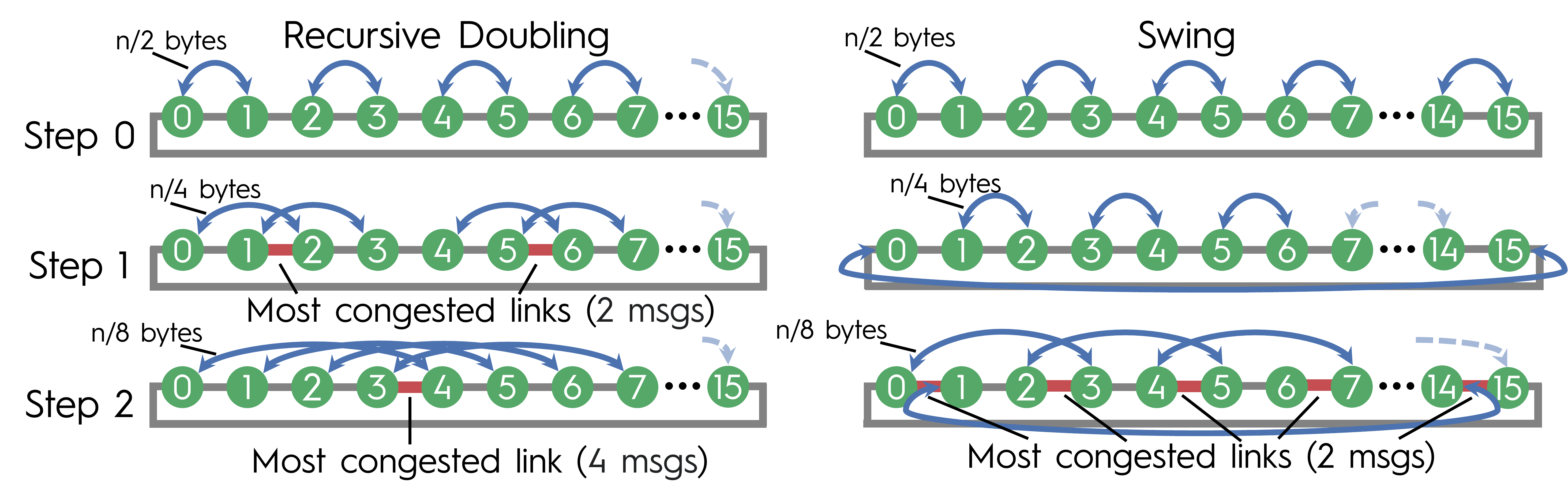}
    \caption{First three steps of the recursive doubling and Swing allreduce algorithms on a 1D torus with $16$ nodes.}
    \label{fig:posterchild_tiny}
\end{figure}

We show the importance of the number of hops in the allreduce through an example.
In Fig.~\ref{fig:posterchild_tiny}, we show a $16$ node 1D torus (we only show a subset of the nodes since the communications are symmetric).
We assume minimal (i.e., shortest path) routing and we show the communications performed by the
bandwidth-optimal recursive doubling algorithm~\cite{10.1177/1094342005051521} 
(also known as \textit{Rabenseifner} algorithm~\cite{10.1007/978-3-540-24685-5_1}, which we describe more in detail in Sec.~\ref{sec:background:torus:recdoubb}),
and by the \textit{Swing} algorithm (that we propose in this work).

Both algorithms perform the same number of steps (we show only the first three
for simplicity). We denote with $n$ the number of bytes of the allreduce vector. In the first step, in both algorithms, each
node $r$ sends $n/2$ bytes to node $q = r \text{ XOR } 1$ (and receives
$n/2$ bytes from it). 
In the second step, however, in the recursive doubling, each node 
$r$ sends $n/4$ bytes to node $q = r \text { XOR } 2$ (two hops distant),
whereas in the Swing algorithm, each node still sends $n/4$ bytes of data,
but with the other neighbor (one hop distant). 

Although both algorithms transmit the same number of bytes,
two different messages cross the same link in the recursive doubling.
For example, two messages cross the link between nodes $1$ and $2$
and that between nodes $5$ and $6$. 
As a consequence, in the worst case all nodes \textit{{transmit data at 
most at half the bandwidth}} of the link between $1$ and $2$, thus slowing
down the entire allreduce operation.
Instead, in the Swing algorithm, each node can still 
transmit at full bandwidth because, in this example, in the second step
each link is crossed at most by one message per direction. Something similar also happens in the third step. 
Indeed, when using Swing at most two messages cross each link instead of the four messages crossing
the link between nodes $3$ and $4$ in recursive doubling.

It is thus clear how even if two different algorithms transmit the same 
number of bytes and perform the same number of steps, they might have different performance, 
depending 
on the network characteristics and the distance between 
communicating nodes. In this example, we have shown an extreme 
case using a 1D torus. However, similar effects can happen
\textit{{on any topology that does not provide full bisection bandwidth}}. 

Although some algorithms (i.e., ring~\cite{10.1016/j.jpdc.2008.09.002,10.1177/1094342005051521} and bucket~\cite{BARNETT1995191,10.1145/2686882, 10.1145/1810085.1810093}) avoid this problem
by having each node communicate with its neighbors only, they perform 
more steps (linear in the number of nodes) and are thus not well-suited for small- and medium-sized vectors. Those are the sizes that, however, are practically used in most machine learning~\cite{10.14778/3415478.3415530} and HPC~\cite{8665758} workloads. Indeed, larger allreduce are split into smaller ones to overlap better computation and communication, especially more when using 3D parallelism in machine learning training~\cite{bennun2018demystifying}.

This work makes the following contributions:
\begin{itemize}[leftmargin=*]
    \item We design a new allreduce algorithm called \textit{Swing}, which performs a logarithmic number of steps and transmits the minimal number of bytes while reducing the distance between communicating nodes compared to other known algorithms designed for small- and medium-sized vectors (Sec.~\ref{sec:swing} and Sec.~\ref{sec:swing:multid}).
    \item We evaluate Swing on different torus and torus-like topologies (e.g., HammingMesh~\cite{10.5555/3571885.3571899} and HyperX~\cite{10.1145/1654059.1654101,10.1145/3295500.3356140}), by comparing it with the best state-of-the-art algorithms (Sec.~\ref{sec:evaluation}). Our evaluation shows that \textit{{Swing outperforms the other existing algorithms for allreduce on vectors ranging from 32B to 128MiB on different torus-like topologies, and regardless of their shape and size}}. We show that Swing outperforms the best-known algorithm up to 2.2x on square torus with \num{4096} to \num{16384} nodes and up to 3x on rectangular tori and HyperX with \num{4096} nodes.
\end{itemize}

\section{Background}\label{sec:background}
\subsection{Targeted Collectives}
We briefly introduce the \textit{reduce-scatter} and \textit{allgather} 
collectives since for medium and large vectors, Swing allreduce algorithm 
executes a reduce-scatter followed by an allgather (similarly to the 
Rabenseifner algorithm~\cite{10.1007/978-3-540-24685-5_1}).
In the reduce-scatter, the compute nodes
reduce vectors (one per node) using a reduction operation (e.g., addition) and shard the resulting vector across all the nodes. 
In the allgather, each node provides a vector that they concatenate and distribute to all the nodes.

If the vector contains a number of elements larger or equal than $p$, 
the nodes can run the allreduce as a reduce-scatter followed by an 
allgather~\cite{10.1177/1094342005051521}. I.e.,  they aggregate
vectors coming from all the nodes and distribute back the resulting
vector. Although for space reasons we mainly target the allreduce, 
Swing can also be used for performing reduce-scatter and
allgather collectives, as well as any other collective operation where recursive doubling or binomial tree can be used  (e.g., broadcast and reduce)~\cite{10.1007/978-3-540-39924-7_38}.

\subsection{Notation and Model}
We consider $D$-dimensional tori of size $\{d_0, d_1, \ldots, d_{D-1}\}$, and we 
denote with $p$ the number of nodes in the torus, i.e., $p = d_0 \cdot d_1 \cdot \ldots \cdot d_{D-1}$. 
We assume the collectives run on $p$ nodes, that ranks are mapped to nodes linearly and, 
without loss of generality, to have one process (or \textit{rank}) per node.
We assume that each node in the network has $2\cdot D$ ports, that each link is bidirectional, and
that each node can send  $2\cdot D$ messages and receive $2\cdot D$ messages concurrently (one send 
and receive per port). We also assume the network forward packets using minimal adaptive routing, and that does not have any hardware support to accelerate collective 
operations (e.g., in-network aggregation~\cite{10.1145/3458817.3476178,Graham2020ScalableHA,10.1145/1088149.1088183}).

To support the description of the different algorithms, we model their performance 
with the commonly used latency and bandwidth model~\cite{10.1145/173284.155333,ALEXANDROV199771,1420226}.
We model the communication time $T(n)$ to send $n$ bytes in a point-to-point 
communication as $T(n) = \alpha + n\beta$, where $\alpha$ represents the 
\textit{latency} (i.e., the time for the first byte to reach the destination), 
and $\beta$ the time to transmit a single byte (it can be seen as the inverse 
of the \textit{bandwidth}). When modeling collective operations involving data reduction, researchers also consider
an additional $\gamma$ term to model the aggregation cost. 
To avoid burdening the notation, we do not 
model this term explicitly since Swing is no worst than the other 
algorithms in that regard and most implementations overlap the aggregation with the communication~\cite{1243181,ompiallreduce}. 

\begin{table}
\footnotesize
\centering
\begin{tabular}{c|c}
 \textbf{\textsc{Name} }& \textbf{\textsc{Description}} \\  \hline
 $D$  & Number of torus dimensions \\ \hline
 $d_0, \ldots, d_{D-1}$ & Size of each dimension \\ \hline
 $n$ & Size of the vector to reduce \\ \hline
 $p$ & Number of nodes in the network \\ \hline
 \defl & Latency deficiency \\ \hline
 \defb & Bandwidth deficiency \\ \hline
 \defc & Congestion deficiency \\ \hline
 $\delta(s)$ & Number of hops between $2$ communicating nodes at step $s$ \\ \hline
 $\rho(s)$ & $\sum_{i=0}^{s} -2^i$ \\ \hline
 $\pi(r, s)$ & The node with which node $r$ communicates at step $s$ \\ 
\end{tabular}
\caption{Variables and functions used in our modeling.}
\label{tab:varfun}
\end{table}

Collective operations involve multiple communication steps. Previous
works proved that
the allreduce requires at least $\log_2{p}$ steps and the transmission
of at least $2\frac{p-1}{p}n \approx 2n$ bytes of data~\cite{BARNETT1995191}. 
Hence, the optimal allreduce time 
can be modeled as $T(n) = \alpha\log_2{p} + \beta2n$. However,
because each node has $2D$ ports, bandwidth-optimal algorithms 
distribute the data equally across all the ports~\cite{10.5555/3571885.3571899,BARNETT1995191,10.1145/2686882, 10.1145/1810085.1810093}, and we can
model the allreduce 
time as $T(n) = \alpha \log_2{p} + \beta\frac{n}{D}$
In practice, however, algorithms have some \textit{deficiency} compared to optimal, 
either in the latency or the bandwidth terms (or in both). 

We consider three different type of deficiencies: i) \textit{latency deficiency} (\defl) i.e., 
how much the latency is higher than the optimal; ii) \textit{algorithmic bandwidth 
deficiency} (\defb), i.e., how 
many more bytes does the algorithm transmit; iii) \textit{congestion bandwidth deficiency} (\defc) i.e., 
what is the bandwidth slowdown caused by multiple messages of the same collective being forwarded 
on the same link (as discussed in Sec.~\ref{sec:intro}). We can see deficiencies as multiplicative
factors that denote how much an algorithm is distant from the optimal (e.g., a latency deficiency
of one means that the algorithm is latency optimal), and we can thus model the allreduce time as:
\begin{equation}
T(n) = \log_2{p} \cdot \alpha \cdot \defl + \frac{n}{D}\beta \cdot \defb \cdot \defc
\end{equation}

For brevity, we refer to $\defb$ as \textit{bandwidth deficiency} and to $\defc$ as \textit{congestion deficiency}.
While \defl and \defb only depend on the algorithm, the congestion deficiency
\defc depends on the network. 
To simplify the notation and the discussion, in the following, we are only going to model
the three deficiencies $\defl$, $\defb$, and $\defc$.
We summarize the variables we use in our modeling in Table~\ref{tab:varfun}.

\subsection{State-of-the Art Algorithms}\label{sec:multidbackground}
In the following, we review and model the main allreduce algorithms
for multidimensional torus, and we summarize in Table~\ref{tab:deficiency:torus} 
their deficiencies, as well as those of the Swing algorithm.
\subsubsection{Hamiltonian Rings}\label{sec:background:ring}
Ring allreduce algorithm~\cite{10.1016/j.jpdc.2008.09.002,10.1177/1094342005051521} runs a reduce-scatter followed by an allgather. Each node splits its data into $p$ equally sized blocks. For both the reduce-scatter and the allgather, the algorithm performs $p-1$ steps. Nodes are arranged in a ring, and at each step, each node sends a block to its right neighbor and receives a block from its left neighbor. Because the algorithm performs $2(p-1) \approx 2p$ the latency deficiency is $\defl = \frac{2p}{\log_2{p}}$.

The algorithm sends $\approx 2n$ bytes ($n$ in the reduce-scatter and $n$ in the allgather). On multiport networks (assuming $2D$ ports), it splits the data into $2D$ parts (of $n/2D$ bytes each) and runs one ring algorithm on each part (each sending and receiving to and from a different port). Since it sends a minimal number of bytes and uses all the ports, the algorithm has no bandwidth deficiency ($\defb = 1$). 

Moreover, because each node only communicates with neighbors on a 1D torus, the algorithm does not have congestion deficiency since each link is used by at most one communication in each direction. In the version optimized for the 2D torus, the four rings that run in parallel are mapped into two edge-disjoint Hamiltonian cycles (one per direction)~\cite{10.5555/3571885.3571899} so that each link is still used by at most one communication per direction (thus $\defc = 1$). To our knowledge, this algorithm does not work for $D > 2$. Moreover, the algorithm can build the two edge-disjoint Hamiltonian cycles on an $r \times c$ 2D torus only if $r = c \cdot k, \: \: k \geq 1$ and the greatest common divisor between $r$ and $c-1$ is $1$, which limits the applicability of the algorithm.

\subsubsection{Latency-Optimal Recursive Doubling}\label{sec:background:torus:recdoubl}
The latency-optimal recursive doubling algorithm~\cite{10.1177/1094342005051521} executes $\log_2{p}$ steps (thus it has a latency deficiency $\defl=1$). At each step $s$ (we denote steps starting from $0$), each node $r$ sends its vector to node $q = r \text{ XOR } 2^s$ (assuming $p$ is a power of $2$) and receives $q$'s vector, which aggregates with its own before moving to the next step. When running on a torus, if the size of each dimension is a power of two, it can keep a shorter distance between communicating nodes by communicating in a different dimension at each step, as shown in Fig.~\ref{img:multid_rd}. 

\begin{figure}[h]
    \centering
    \includegraphics[width=\linewidth]{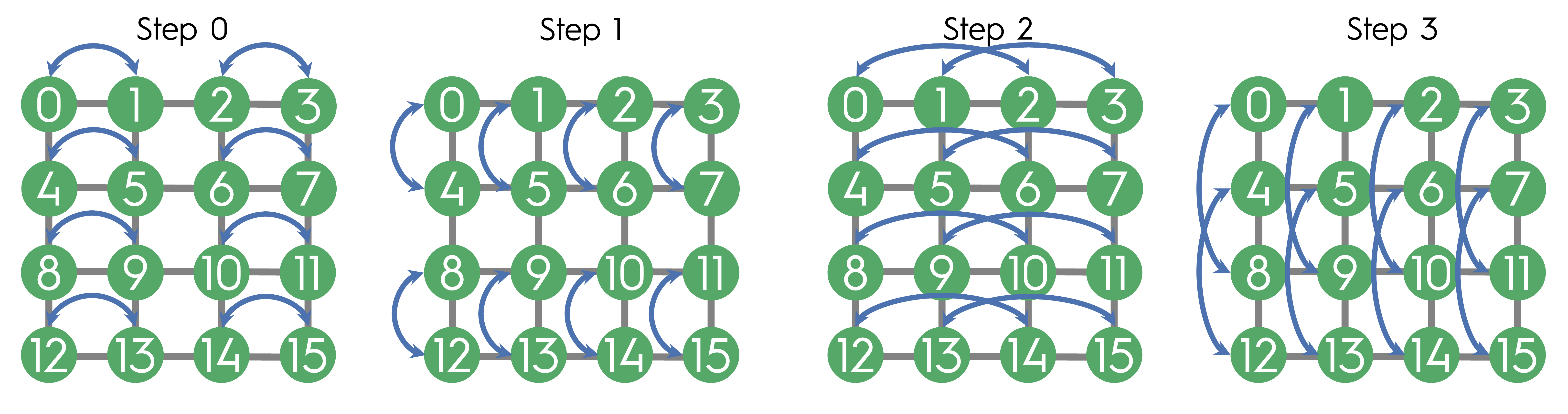}
    \caption{Example of recursive doubling on a $4x4$ torus. Wrap-around links are not shown.}
    \label{img:multid_rd}
\end{figure}

Each node transmits $n \log_2{p}$ bytes of data. To our knowledge, no multiport versions of this algorithm exist, and we model its bandwidth deficiency as $\defb = D \log_2{p}$. Each node communicates with $D$ nodes at distance $2^i$ (one per dimension), with $0 \leq i < \frac{\log_2{p}-1}{D}$. For this algorithm, the number of messages forwarded on the most congested link is equal to the distance between communicating nodes\footnote{The only exception to this is the last step in each dimension because each node
can reach its peer with two different minimal paths. For example, in Fig.~\ref{img:multid_rd}, in step $2$ node $0$ can send half of the
packets directed to $2$ to the right and half to the left on the wrap-around
link (not shown). However, this is negligible for large enough networks.}.
We can thus estimate the congestion deficiency as the sum of the distance between communicating nodes over all the steps, i.e., $\defc = D\sum_{i=0}^{\frac{\log_2{p}-1}{D}} 2^i \leq 2D\sqrt[D]{p}$.

\paragraph{Non-power-of-two}
If the size of a dimension is not a power of two, some extra steps are needed. One possible solution consists in reducing the number of nodes to the largest power-of-two $p' < p$~\cite{doi:10.1142/S0129626493000046,doi:10.1142/S012962649300037X}. Before starting the allreduce, each node in the range $(p', p-1)$ sends its data to a node in the range $(0, p'-1)$. Then, the first $p'$ nodes run the allreduce and, when completed, send the reduced data to the nodes in the range $(p', p-1)$. This algorithm increases all the deficiencies compared to the case where $p$ is a power of two, but we do not model this explicitly for brevity.

\begin{table}
\scriptsize
\centering
\begin{tabular}{p{1.65cm}|p{1.5cm}|p{1.7cm}|p{.25cm} p{.25cm} p{.25cm}}

 \multirow{2}{*}{\textbf{\textsc{Algorithm}}} & \multirow{2}{*}{\textbf{\textsc{Lat. Def. (\defl)}}}        & \multirow{2}{*}{\textbf{\textsc{Band. Def. (\defb)}}} & \multicolumn{3}{c}{\textbf{\textsc{Cong. Def. (\defc)}}} \\ %
 & & & $D{=}2$ & $D{=}3$ & $D{=}4$ \\ \hline 
 \textsc{Ring}      & $\frac{2p}{\log_2{p}}$            & $1$ & \multicolumn{3}{c}{$1$} \\  
 
 \textsc{Rec.Doub. (L)}      & $1$    & $ D \log_2{p}$ & \multicolumn{3}{c}{$2D\sqrt[D]{p}$} \\  
 \textsc{Rec.Doub. (B)} & $2$    & $2D$ &  \multicolumn{3}{c}{$\frac{2^D - 1}{2^D - 2}$} \\  
 \textsc{Bucket}    & $\frac{2D\sqrt[D]{p}}{\log_2{p}}$ & $1$ & \multicolumn{3}{c}{$1$}\\  \hline  
 \textsc{Swing (L)}     & $1$    & $D\log_2{p}$ &  \multicolumn{3}{c}{$\frac{4}{3}D\sqrt[D]{p}$}\\
 \textsc{Swing (B)}     & $2$    & $1$ & $1.19$ & $1.03$ & $1.008$\\
\end{tabular}
\caption{Algorithms deficiencies on $D$-dimensional torus with $D \geq 2$. (L) and (B) stands for latency-optimal and bandwidth-optimal (\textit{-optimized} for recursive doubling) respectively.}
\label{tab:deficiency:torus}
\end{table}

\subsubsection{Bandwidth-Optimal Recursive Doubling}\label{sec:background:torus:recdoubb}
The classic bandwidth-optimal recursive doubling algorithm (also known as Rabenseifner algorithm~\cite{10.1177/1094342005051521}) performs an allreduce as a reduce-scatter followed by an allgather. The reduce-scatter and the allgather use recursive doubling (each performing $\log_2{p}$ steps). Differently from the latency-optimal algorithm, each node divides the data into $p$ blocks $\{b_0, \ldots, b_{p - 1}\}$, each of size $\frac{n}{p}$. 

At each step, the reduce-scatter halves the size of the transmitted data and doubles the distance between communicating nodes. Thus, each node transmits $n$ bytes of data in the reduce-scatter. The allgather works similarly but reverses the communication pattern, doubling the size of the transmitted data at each step and halving the distance between communicating nodes. The allreduce executes $2\log_2{p}$ steps ($\defl = 2$), and transmits $2n$ bytes of data. 

Sack et al. optimized the algorithm for torus networks~\cite{10.1145/2686882}, similarly to what we described for the latency-optimal algorithm, to reduce its congestion deficiency to $\defc = \frac{2^D - 1}{2^D - 2}$ for $D > 1$~\cite{10.1145/2686882}. However, to our knowledge, no multiport versions of this algorithm exist, and its bandwidth deficiency is $\defb = 2D$. Hence, for torus networks we consider this algorithm as \textit{bandwidth-optimized} rather than \textit{bandwidth-optimal}. 

\paragraph{Non-power-of-$2$}
If the number of nodes $p$ is not a power of $2$, the algorithm performs some extra steps~\cite{10.1007/978-3-540-30218-6_13, 10.1177/1094342005051521} that increase the latency and reduce the bandwidth (because transmit extra data). Allreduce implementations can use either similar techniques like those described for the latency-optimal recursive doubling or more sophisticated ones like the \textit{3-2 elimination} technique~\cite{10.1007/978-3-540-30218-6_13} (which increases the bandwidth deficiency to $3/2$). However, we found no reference of adaptations to torus networks.
\subsubsection{Bucket Algorithm}\label{sec:background:bucket}
To simplify the exposition, we first describe the algorithm for a $a \times a$ 2D torus (with $a \cdot a = p$)~\cite{BARNETT1995191}. Each node runs a ring reduce-scatter with the other $a-1$ nodes on the same row. This requires $a-1$ steps, and each node transfers $n\frac{a-1}{a}$ bytes. Then a ring reduce-scatter with the other $a-1$ nodes on the same column, but only on the data already reduced at the previous step (of size $\frac{n}{a}$). Then, each node runs an allgather with all the nodes on the same column and then with all those on the same row.

On a $D-$dimensional torus, the algorithm performs $D$ reduce-scatter followed by $D$ allgather (each on $\sqrt[D]{p}$ nodes on a square torus). Because it runs $2D\sqrt[D]{p}$ steps, the latency deficiency is $\defl=\frac{2D\sqrt[D]{p}}{\log_2{p}}$. To use all the $2\cdot D$ ports, the algorithm splits the data into $2\cdot D$ parts and concurrently runs $2 \cdot D$ bucket algorithms (one for each part)~\cite{10.1145/2686882, 10.1145/1810085.1810093}. Since the algorithm sends the minimal number of bytes and uses all the ports evenly, the bandwidth deficiency is $\defb = 1$. Each of the $2\cdot D$ bucket algorithms starts from a different dimension and direction so that, at each step, each link is used by at most one ring per direction (i.e., the congestion deficiency is $\defc=1$). 

\subsubsection{Other Approaches} 
\paragraph{Topology-Specific Algorithms} Researchers proposed several allreduce algorithms~\cite{10.1007/978-3-540-39924-7_38,BARNETT1995191,kolmakov2020generalization,RUEFENACHT201724,JOCKSCH2021102812,10.1007/978-3-540-24685-5_1}, some of which optimized for specific topologies~\cite{10.1145/1088149.1088183,9644896,10.1145/3524059.3532380}. In this work, we focus on those explicitly designed for torus networks, since they are characterized by a lower congestion deficiency. 

\paragraph{Automatic Generation of Collective Algorithms} Some approaches use linear programming formulations for finding the best collective algorithm given a network specification and the size of the collective~\cite{shah2022taccl, 10.1145/3437801.3441620}. However, this requires solving an NP-hard problem that grows exponentially with scale. Finding a solution for 128 nodes requires up to 11 hours~\cite{shah2022taccl}, and a new solution might need to be found when changing the number of nodes or the size of the collective. This makes generating collective algorithms for large systems like the \num{9000} nodes Google's TPU v5p pod~\cite{tpuv5} challenging or even impossible. On the contrary, Swing can seamlessly run on any number of nodes. Moreover, unlike Swing, some of these solutions do not explicitly model the congestion deficiency.

\paragraph{Topology Reconfiguration and In-Network Compute} Other solutions improve allreduce performance by re-configuring the network topology according to the specific traffic pattern~\cite{285119}. Swing is orthogonal to these approaches and, by reducing the network congestion, can make the expensive network re-configurations less frequent. Last, some algorithms exploit in-network compute capabilities of programmable switches~\cite{SwitchML,265053,10.1145/3452296.3472904,DESENSI202470} to aggregate data directly in the network, reducing network traffic and improving performance. However, unlike Swing, these solutions require specific switches to be deployed in the network, whereas Swing can seamlessly run on any network.

\section{Swing Design}\label{sec:swing}
By analyzing the algorithms we described, we observe different tradeoffs. The latency-optimal recursive doubling has the lowest latency deficiency. It is thus more suited to small allreduce, where the number of steps executed by the algorithm, rather than the total number of transmitted bytes, dominates the runtime. On the other hand, ring and bucket algorithms are characterized by the lowest bandwidth and congestion deficiency, and we expect them to perform better on large allreduce. The bandwidth-optimized recursive doubling lies somewhere in between since it has a higher bandwidth and congestion deficiency but a lower latency deficiency and would perform better for medium-sized vectors.

With the Swing algorithm, we aim at designing an algorithm with a congestion deficiency \defc lower than the bandwidth-optimized recursive doubling algorithm by reducing the distance between communicating nodes. We also aim to reduce bandwidth deficiency \defb by using all the $2D$ available ports. To simplify the exposition, we first discuss the design of the bandwidth-optimal Swing algorithm on a 1D torus, assuming the number of nodes $p$ is a power of $2$ (Sec.~\ref{sec:swing:p2}). Then, we extend it to any number of nodes (Sec.~\ref{sec:swing:nonp2}), and describe its design for tors with more than one dimension (Sec.~\ref{sec:swing:multid}).

\subsection{Algorithm Design}\label{sec:swing:p2}
\subsubsection{Bandwidth-optimal Algorithm}
We describe in the following the design of the Swing algorithm, and we formally prove its correctness in Appendix~\ref{sec:proof}.
The bandwidth-optimal Swing algorithm runs a reduce-scatter followed by an allreduce. In the reduce scatter, at step $s$ (starting from $0$), each node $r$ communicates with a node $\pi(r, s)$ such that:
\begin{equation}\label{eq:pi}
\pi(r, s) = 
\begin{cases}
    r + \rho(s) \bmod p, & \text{if } r \text{ is even} \\
    r - \rho(s) \bmod p, & \text{if } r \text{ is odd} \\
\end{cases}
\end{equation}

Where $\rho(s) = \sum_{i=0}^{s} (-2)^i = \frac{1 - (-2)^{s+1}}{3}$. This selection of the communicating peer leads to a communication pattern like the one shown in Figure~\ref{fig:posterchild_tiny} for a $16$ nodes 1D torus. We observe how, at each step, the communicating peer of each node \textit{swings} from left to right and vice versa (hence the algorithm's name). \textit{{Intuitively, unlike recursive doubling, each node reaches distant nodes in fewer hops by short-cutting the ring}}. 

More precisely, at each step, each node communicates with a node at a distance $\delta(s)$, with:
\begin{align}
\begin{split}
\delta(s) &= |\rho(s)| = \Big|\sum_{i=0}^{s} -2^i\Big| = \frac{2^{s+1} - (-1)^{s+1}}{3} \leq \\
&\leq \frac{2^{s+1} + 1}{3} < 2^s + \frac{1}{3}\nonumber
\end{split}
\end{align}
Because $\delta(s)$ is always a natural number, we have $\delta(s) \leq 2^s$ (it is strictly smaller for $s > 1$). Hence, Swing has a lower congestion deficiency than recursive doubling because of the lower distance between communicating nodes (we estimate precisely the congestion deficiency in Sec.~\ref{sec:swing:multid}).

For simplicity, we first describe the reduce-scatter algorithm using only one port and extend it to use all the $2D$ ports. In the reduce-scatter each node splits data into $p$ blocks $\{b_0, b_1, \ldots, b_{p - 1}\}$, each of size $\frac{n}{p}$. Each node $r$ runs $\log_2{p}$ steps, communicating at each step $s$ with the node $\pi(r, s)$ and halving the size of the transmitted data. At the end of 
the reduce-scatter, we want each node $r$ to have the fully aggregated block $b_r$.

To do so, data transmitted from $r$ to $q$ includes the block $b_q$, plus all the blocks that $q$ will transmit to other nodes in the subsequent steps. The allgather works similarly, but each node selects its peer in the reverse order, thus communicating first with the more distant ones. In the first step, each node $r$ sends its block $b_r$, doubling the transmitted data's size at each step (data transmitted from $r$ to $q$ includes all the blocks that $r$ gathered until step $s$).
Because the algorithm performs $2\log_2{p}$ steps, its latency deficiency is $\defl{=}\frac{\log_2{p}}{2}$. Because it transmits the minimal number of bytes and uses all the ports (as we will show in Sec.~\ref{sec:swing:multid}), its bandwidth deficiency is $\defb=1$. We estimate the congestion deficiency in Sec.~\ref{sec:swing:multid} when describing the algorithm for torus with more than one dimension.

We summarize the algorithm in Listing~\ref{lst:1d} for reduce-scatter (the algorithm for allgather is analogous). The function \texttt{get\_rs\_idxs} computes the indexes of the data blocks that a given node \texttt{r} must send at step \texttt{step}, and relies on the function \texttt{pi} we defined in Eq.~\ref{eq:pi}. Then, the \texttt{reduce\_scatter} function executes $\log_2{p}$ steps, and at each step, computes the bitmaps $\texttt{blocks\_s}$ and $\texttt{blocks\_r}$ denoting the blocks of data that must be sent and received. Last, it sends and receives those blocks. 

\begin{listing}[!ht]
\begin{minted}[obeytabs=true,tabsize=2,fontsize=\footnotesize,frame=lines]{python}
def get_rs_idxs(r, step, p, blocks):
  if step >= log2(p): return
  for s in range(step, int(log2(p))):
    peer = pi(r, s, p)
    # Set to 1 the node I directly reach
    blocks[peer] = 1 
    # and those that it will reach
    get_rs_idxs(peer, s+1, p, blocks) 
    
def reduce_scatter(r, p, data):    
  for s in range(0, int(log2(p))):
    blocks_s = blocks_r = [0]*p
    dest = pi(r, s)
    get_rs_idxs(r, s, p, blocks_s)
    get_rs_idxs(peer, s, p, blocks_r)
    # Send blocks where blocks_s[i]=1, 
    # recv blocks where blocks_r[i]=1
    sendrecv(dest, data, blocks_s, block_r)        
\end{minted}
\caption{Swing reduce-scatter pseudocode.}
\label{lst:1d}
\end{listing}

We can transmit non-contiguous data using, for example, MPI \textit{indexed} datatypes. However, because communicating non-contiguous data can introduce overhead~\cite{gropp1999improving,DBLP:conf/pvm/SchneiderKH13}, in the allreduce, we logically remap the blocks (i.e., without any actual memory movement) so that each node sends contiguous data. Indeed, even if the algorithm shuffles the block in the reduce-scatter, they are eventually placed again in the proper order in the buffers by the allgather. Moreover, by sending contiguous data, we also guarantee that the algorithm works with non-commutative reduction operators~\cite{10.1007/978-3-540-30218-6_13,10.1007/978-3-540-39924-7_38}.

\subsubsection{Latency-optimal Algorithm}
The latency-optimal Swing algorithm uses the same communication pattern as the bandwidth-optimal one, but instead of running a reduce-scatter followed by an allgather, at each step each node exchanges its entire vector with that of its peer (similarly to the latency-optimal recursive doubling). The algorithm only requires $\log_2{p}$ steps ($\defl=1$) but transmits $n \cdot \log_2{p}$ bytes ($\defb = D\log_2{p}$ because the algorithm uses all the ports).  We estimate the congestion deficiency in Sec.~\ref{sec:swing:multid} when describing the algorithm for torus with more than one dimension.

\subsection{Non-power-of-two Nodes}\label{sec:swing:nonp2}
When $p$ is even but not a power of $2$, some nodes can receive the same block of data twice (one of which in the last step, see Appendix~\ref{sec:proof:nonpower}). Thus, in that case, it is enough for each node not to send the same data block twice. Because no extra data is sent compared to the power of two cases, deficiencies do not increase. 

If $p$ is odd, we run the algorithm on $p-1$ nodes, while node $p-1$ at each step sends $(p-1)/ 2^s$ of its blocks to the corresponding $(p-1) / 2^s$ nodes (and receiving from those its block). We show this through an example in Fig.~\ref{img:non-pow-2} for a 1D torus with $7$ nodes (we only show the first two steps). The first $6$ nodes run the algorithm for even nodes as usual. At step $0$ the last node sends (and receives) $\frac{n}{7}$ bytes to nodes $0$, $1$, and $2$. At step $1$, the last node sends $\frac{n}{7}$ bytes to node $3$ and $4$, and in the last step $\frac{n}{7}$ bytes to node $5$. This slightly increases the bandwidth deficiency (by a $1/p$ additive factor).

\begin{figure}[h]
    \centering
    \includegraphics[width=0.7\linewidth]{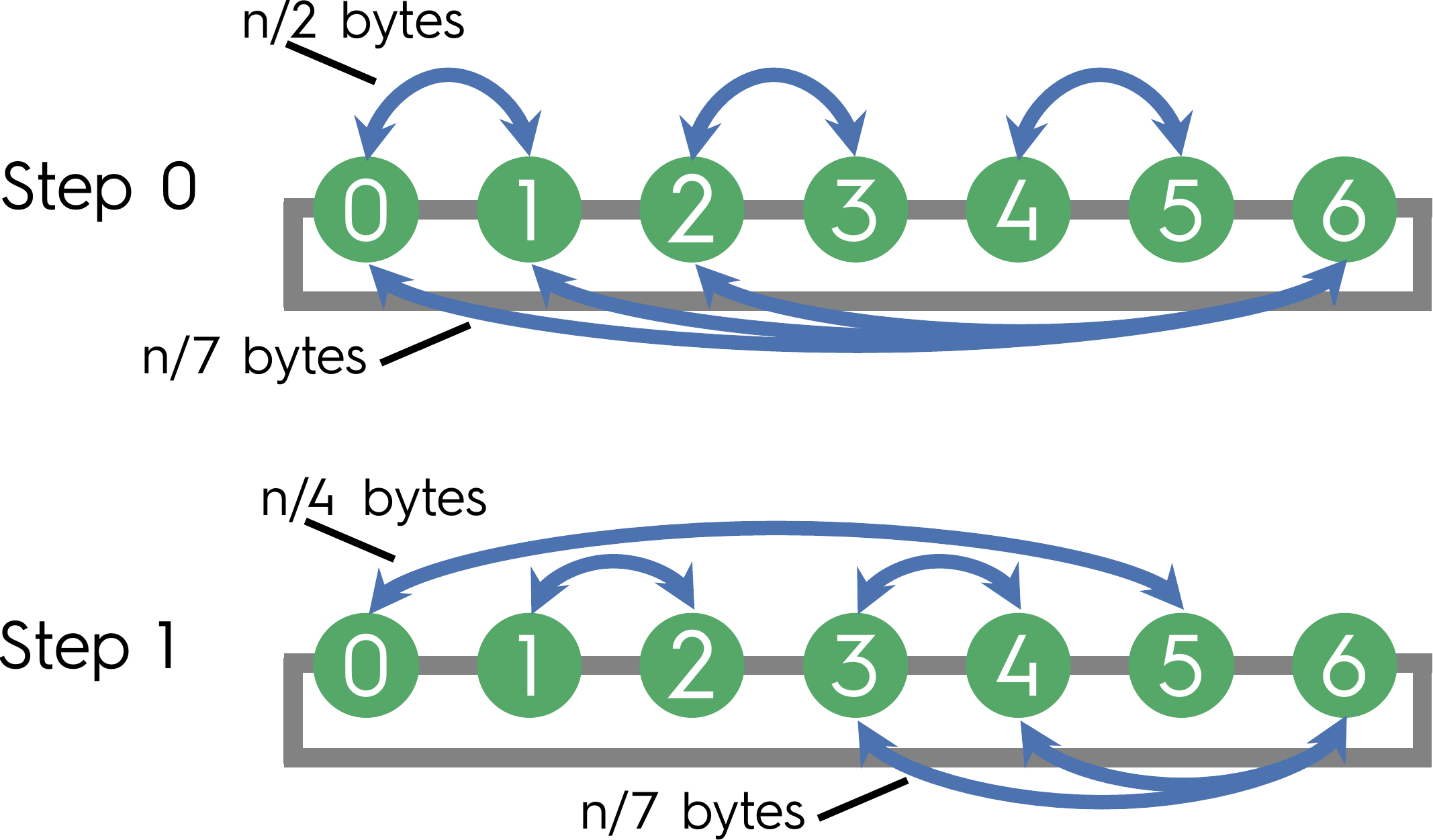}
    \caption{First 2 step of the Swing algorithm on a 1D torus with 7 nodes.}
    \label{img:non-pow-2}
\end{figure}

\section{Design for Multidimensional Tori}\label{sec:swing:multid}
In Sec.~\ref{sec:swing:p2}, we described the design of the Swing algorithm for 1D torus. We now discuss how to extend it to square (Sec.~\ref{sec:swing:multid:square}) and non-square (Sec.~\ref{sec:swing:rectangular}) torus with more than one dimension.

\subsection{Square Tori}\label{sec:swing:multid:square}
Like the recursive doubling algorithms optimized for tori (discussed in Sec.~\ref{sec:background:torus:recdoubl} and Sec.~\ref{sec:background:torus:recdoubb}), in the Swing algorithm (both the latency- and the bandwidth-optimal) each node communicates on one dimension at a time. Formally, at step $s$, each node communicates on the dimension $\omega(s) = s \bmod{D}$. We define with $\sigma(s) = \floor{\frac{s}{D}}$ the step of the algorithm relative to a specific dimension. For example, on a 2D torus, the third step of the algorithm is the second step executed in the first dimension (i.e., because we count steps starting from $0$, $\sigma(2) = \floor{3/2} = 1$).

We denote the coordinates of a node with $(a_0, a_1, \ldots, a_{D-1})$. Then, at step $s$, each node communicates with a node with the same coordinates, except for the coordinate $a_{\omega(s)}$. If $a_{\omega(s)}$ is even, the coordinate $a_{\omega(s)}$ is then replaced with $(a_{\omega(s)} + \delta(\sigma(s))) \bmod{d_{\omega(s)}}$ (if odd, we flip the sign before $\delta(\sigma(s))$).

To use all the $2\cdot D$ ports, Swing splits the data into $2\cdot D$ parts and runs one allreduce on each. To avoid increasing the congestion deficiency, we must guarantee that, at each step, each of these $2 \cdot D$ collectives use different links. Swing runs $D$ of these collectives (that we call \textit{plain} collectives), each starting from a different dimension, using the algorithm described above. Swing runs the remaining $D$ collectives (which we call \textit{mirrored} collectives) with the same approach but starting from the opposite direction than that of the corresponding \textit{plain} collective. By doing so, each of the $2\cdot D$ allreduce uses a different port at each step.

\begin{figure}[h]
    \centering
    \includegraphics[width=\linewidth]{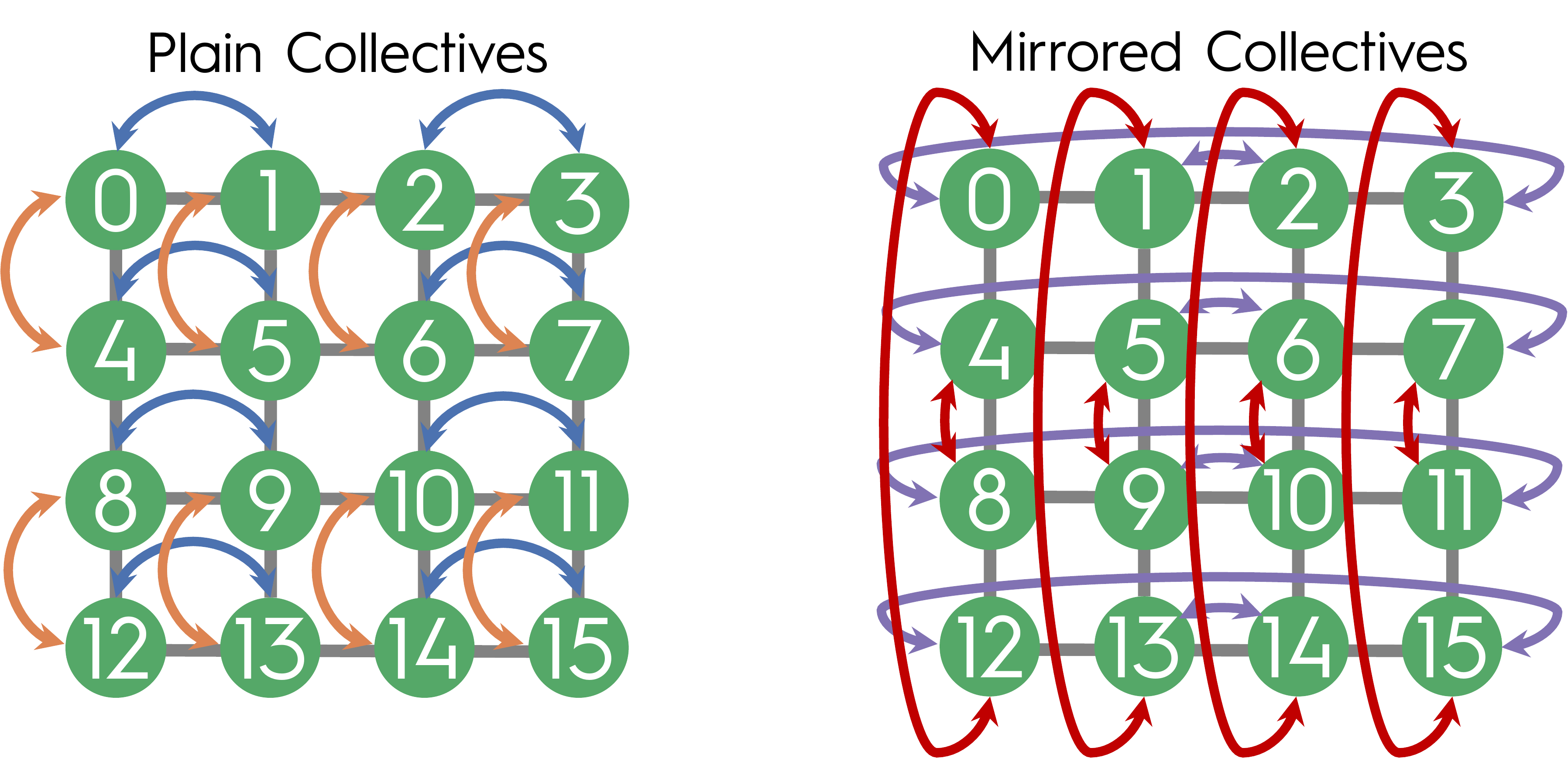}
    \caption{First step of the Swing algorithm on a 4x4 torus.}
    \label{img:multiport}
\end{figure}

We show this through an example in Figure~\ref{img:multiport} on a $4 \times 4$ torus by showing the first step of the Swing algorithm. Node $0$ runs a plain collective on the horizontal dimension exchanging data with node $1$, and one in the vertical dimension with node $4$. The two mirrored allreduce inverts the directions, exchanging data with node $3$ in the horizontal dimension and node $12$ in the vertical one.

Whereas both latency and bandwidth deficiencies are unaffected by the number of dimensions, congestion deficiency decreases with the number of dimensions due to increased bisection bandwidth. Intuitively, the more dimensions the torus has, the more communications the algorithm does with nodes at a closer distance before moving to higher distances. Because Swing halves the data size at each step, when a node needs to communicate with a distant node, data size becomes smaller and is thus less affected by congestion.

We can model the bandwidth term of the Swing allreduce as:
\begin{equation}
\frac{n}{2D}\beta \sum_{s=0}^{\log_2{(p)} - 1} \frac{\delta(\sigma(s))}{2^{s+1}}
\nonumber
\end{equation}

I.e., at each step $s$ Swing halves the size of the data and, because the distance between communicating nodes at step $s$ is $\delta(\sigma(s))$, there is at least one link shared by $\delta(\sigma(s))$, proportionally increasing the time to transmit one byte. 

We can estimate the congestion deficiency by dividing this quantity by $(n\beta)/D$. Instead of deriving a hardly readable closed form, we report in Table~\ref{tab:deficiency:torus} the values for different dimensions and for $p \to \infty$ (since the congestion deficiency increases with $p$). If $D \geq 3$, the Swing algorithm has a congestion deficiency $\defc < 1.003$ (i.e., lower than 3\%). To estimate the latency-optimal version's congestion deficiency, we sum the distances over all the steps, similar to what we did in Sec.~\ref{sec:background:torus:recdoubl} for the latency-optimal recursive doubling algorithm. I.e.:
\begin{equation}
\defc = D\sum_{s=0}^{\frac{\log_2{(p)} - 1}{D}} \delta(s) \leq \frac{4}{3}D\sqrt[D]{p}
\nonumber
\end{equation}

We want to remark that the state-of-the-art latency-optimal and bandwidth-optimized recursive doubling algorithms described in Sec.~\ref{sec:background:torus:recdoubl} and Sec.~\ref{sec:background:torus:recdoubb} only use one port. In principle, we could extend them to use $2D$ ports by using the same approach we used for Swing, running $D$ plain and $D$ mirrored collectives. However, as we show in Sec.~\ref{sec:evaluation:2d}, they will perform strictly worse than Swing. Indeed, while mirroring decreases their bandwidth deficiencies, their congestion deficiencies are still higher than that of Swing due to the higher distance between communicating nodes.

\subsection{Non-Square Tori}\label{sec:swing:rectangular}
If not all the dimensions have equal size, the algorithm completes all the steps in one dimension while there are still steps to execute in other dimensions. If $d_{min}$ is the smallest dimension, for the first $D \cdot \log_2(d_{min})$ the algorithm behaves exactly like in a $d_{min} \times \ldots \times d_{min}$ torus. After that, no data is sent anymore on that dimension, and the algorithm proceeds on the remaining ones. However, from that point on, it does not use all the available ports. 

Indeed, since data is not transmitted anymore into one of the dimensions, the ports on that dimension are not used. If $d_{min}$ is large enough, this has a limited impact because the size of the transmitted data decreases after each step. We show this through an example in Fig.~\ref{img:rectangular} on a $2 \times 4$ torus where to not clutter the figure, we report only the communications performed by node $0$. In step $2$ (the last step), all the $4$ collectives communicate on the horizontal dimension since each node has already reached all the nodes in their column.

\begin{figure}[h]
    \centering
    \includegraphics[width=\linewidth]{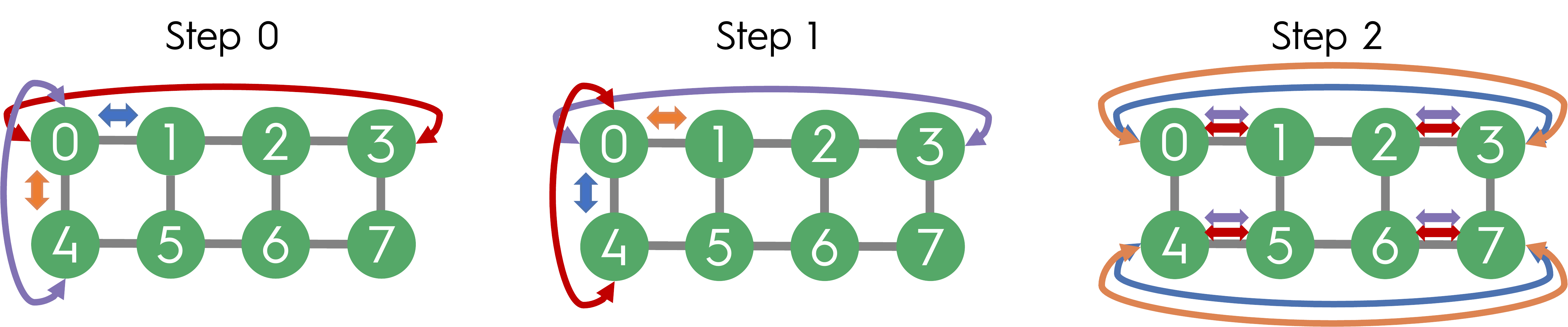}
    \caption{Multiport Swing collective on a 2x4 torus.}
    \label{img:rectangular}
\end{figure}

While there is no difference in latency and bandwidth deficiency, the congestion deficiency increases compared to a square torus. In practice, it will be somewhere between that of an equivalent 1D torus (in the worst case) and of a perfectly square $D$-dimensional torus with the same number of nodes. The actual congestion deficiency depends on the relative differences between the dimensions. We denote with $d_{min}$ the smallest dimension and with $d_{max}$ the largest one and we consider, as a worst case, a $d_{min} \times d_{min} \times \ldots \times d_{max}$ torus.

The algorithm uses all the ports for the first $D \cdot \log_2(d_{min})$ steps and behaves as it runs on a $D$-dimensional torus with $d_{min}^D$ nodes. After that, it used only two ports for the remaining steps, behaving as it runs on a 1D torus, but starting from step $s=\log_2{d_{min}}$ and on data of size $\frac{n}{2 \cdot 2^{D\log_2{d_{min}}}} = \frac{n}{2 d_{min}^D}$. We denote the congestion deficiency of this second phase as:
\begin{align}\label{eq:congdef:rect}
    \begin{split}
        \Xi_{Q} &= \frac{1}{2 \cdot d_{min}^D} \sum_{s=\log_2{d_{min}}}^{\log_2(d_{max}) - 1} \frac{2^{s+ 1} - (-1)^{s +  1}}{3 \cdot 2^{s + 1 - \log_2{d_{min}}}} \approx \\
        &\approx \frac{1}{6 \cdot d_{min}^D} \sum_{s=\log_2{d_{min}}}^{\log_2(d_{max}) - 1} 2^{\log_2{d_{min}}} = \\
         &= \frac{(\log_2{(d_{max}/d_{min})})}{6 \cdot d_{min}^{D-1}}
    \end{split}
\end{align}

We can then approximate the congestion deficiency for rectangular tori by summing the one for square tori to the one in Eq.\ref{eq:congdef:rect} (that is $0$ for square tori). Generally, the higher the ratio between the largest and the smallest dimension, the more steps the algorithm executes not using all the available $2D$ ports. Nevertheless, we show in Sec.~\ref{sec:evaluation:rectangular} that Swing still outperforms state-of-the-art algorithms by up to 3x, except on very large allreduce ($\geq$ 128MiB).

\section{Experimental Evaluation}\label{sec:evaluation}
We evaluate the performance of the Swing algorithm on several torus and torus-like networks by comparing it with the best state-of-the-art algorithms described in Sec.~\ref{sec:multidbackground}. We implemented all these algorithms in the \textit{Structural Simulation Toolkit} (SST~\cite{SST}), a packet-level network simulator. We simulate networks with 400Gb/s links with 100ns latency and 300ns of per-hop packet processing latency~\cite{sensi-slingshot,hoefler2023datacenter}. %

Since each node has $2D$ ports, the maximum injection bandwidth of a node is $2 \cdot D \cdot 400 Gb/s$. Also, in all the plots, we show the \textit{goodput}, i.e., how many bytes are reduced per time unit. Because the allreduce needs to send at least twice the number of bytes in the vector~\cite{BARNETT1995191}, the peak goodput is half the injection bandwidth (i.e., $D \cdot 400 Gb/s$). 

In the following, we analyze the performance on 2D square (Sec.~\ref{sec:evaluation:2d}) and rectangular tori (Sec.~\ref{sec:evaluation:rectangular}), higher-dimensional tori (Sec.~\ref{sec:evaluation:34d}), and other torus-like topologies (Sec.~\ref{sec:evaluation:hx}). Eventually, we summarize the results (Sec.~\ref{sec:evaluation:summary}).

\subsection{Performance on 2D Square Torus}\label{sec:evaluation:2d}
In Fig.~\ref{img:torus_4096}, we show the performance evaluation on a 64x64 2D torus with \num{4096} nodes. In the main plot, we show the goodput of the allreduce for different vector sizes, with each line representing a different algorithm. For the Swing algorithm, for each size we only report the best between the latency- and bandwidth-optimal versions, and we annotate the point where we switch from the latency-optimal to the bandwidth-optimal algorithm with a large dot. We do something similar for recursive doubling as well. 

\begin{figure}[h]
        \begin{center}
            \includegraphics[width=\columnwidth]{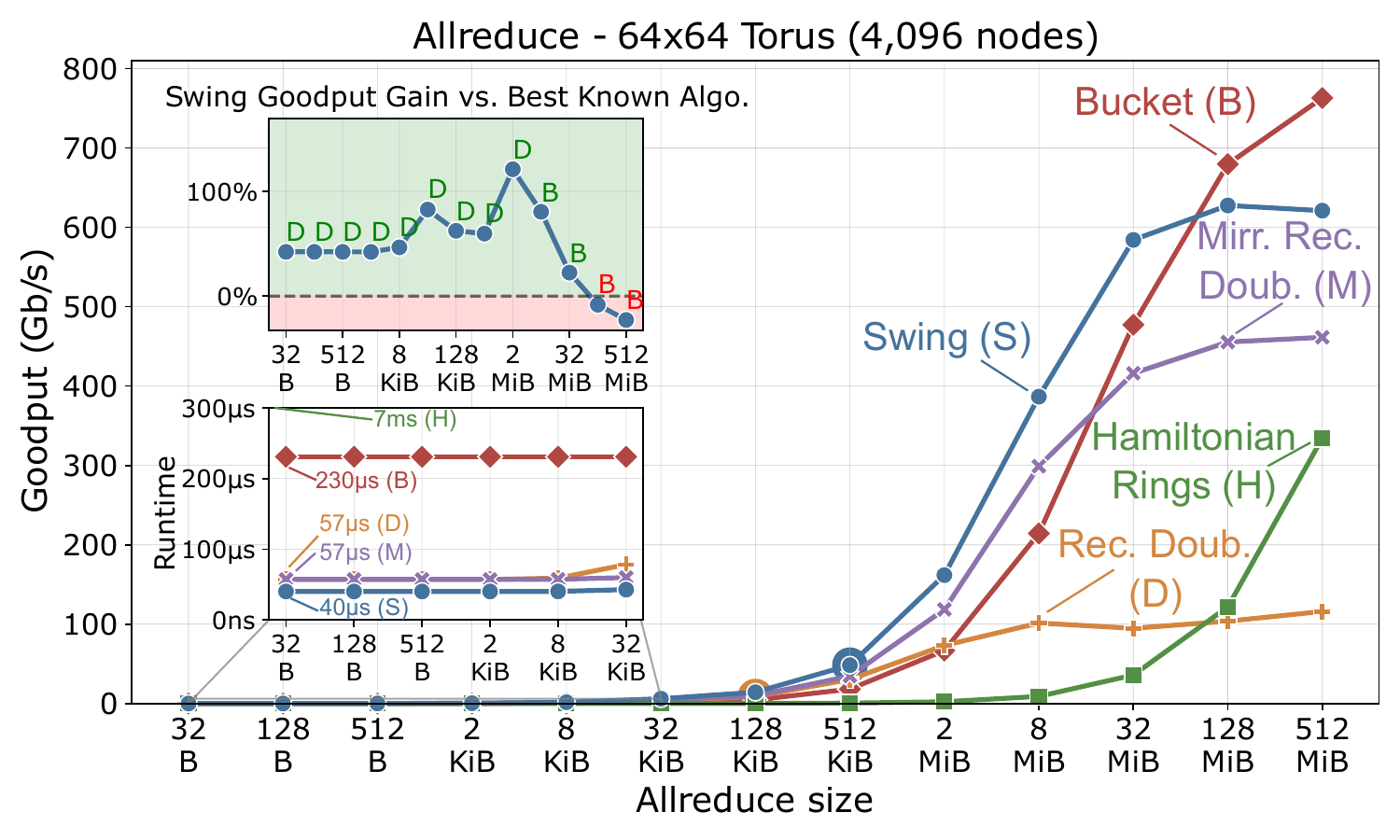}
        \end{center}
        \caption{Goodput of allreduce algorithms on a 64x64 2D Torus topology with \num{4096} nodes. The small plot in the bottom left reports runtime for small allreduce (ranging from 32B to 32KiB). The runtime for 32B allreduce is annotated using the same one-letter labels as in the main plot. The top left inner plot shows the goodput gain of Swing compared to the best-known state-of-the-art algorithm, and the letters on top of each datapoint denote the name of the best-known algorithm.}
        \label{img:torus_4096}
\end{figure}

The zoomed-in plot in the bottom left shows the runtime of each algorithm for allreduce on small vectors (from 32B to 32KiB). We also denote the runtime for 32B allreduce for each algorithm, using the same one-letter labels used in the main plot. Because the runtime of the ring algorithm for small vectors is orders of magnitude larger than the other algorithms, it is always out of scale (the time is nevertheless annotated on the top of the small plot). 

Last, the top left plot shows, for each allreduce size, the goodput gain of Swing compared to the best-known algorithm as a function of the allreduce size. For example, a $100\%$ gain denotes Swing is 2x faster than the best state-of-the-art algorithm. The letter at each data point represents the name of the best-known algorithm. 

For completeness, in this plot, we also show our improved version of recursive doubling, which uses all the ports (denoted as \textit{Mirrored Recursive Doubling}) using the same \textit{plain} and \textit{mirrored} allreduce technique used by Swing (and described in Sec.~\ref{sec:swing:multid}). Swing consistently outperforms our mirrored recursive doubling at any size due to the lower congestion deficiency, and we thus exclude it from the comparison and from the subsequent results.

By analyzing the results, we observe that Swing outperforms all the other allreduce algorithms for vectors ranging from 32B to 32MiB due to the lower latency deficiency compared to the ring and bucket algorithms and the lower bandwidth deficiency compared to the latency-optimal recursive doubling algorithm. We observe more than 2x improvement over the recursive doubling algorithm for 2MiB allreduce. 

The bucket algorithm performs better than Swing starting from 128MiB due to its lower congestion deficiency, which compensates for the higher latency deficiency on large vectors. This is instead not the case for the ring algorithm, characterized by a higher latency deficiency than the bucket algorithm. Moreover, we observe that on 512MiB, Swing achieves around $77\%$ of the peak goodput, which is what we would expect from our modeling. Indeed, a congestion deficiency of $1.19$ on a 2D torus (see Table~\ref{tab:deficiency:torus}) means Swing can reach at most $81\%$ of peak goodput.

By analyzing the two small inner plots, we observe up to $50\%$ improvement for small vectors ($\leq$ 32KiB) compared to the latency-optimal recursive doubling algorithm. This is partly due to the lower congestion deficiency, but mostly to the shorter distance between communicating nodes, which reduces the latency $\alpha$ (although we did not explicitly model it so as not to burden the notation). We observe the highest goodput gains (around $120\%$) for 2MiB vectors. This is indeed the sweet spot where the recursive doubling algorithm performs poorly, and the performance of the bucket and ring algorithms is still severely affected by their higher latency deficiency.

\subsubsection{Scaling}
We then analyze the performance of Swing on 2D torus of different sizes. We show in Fig.~\ref{img:scaling} the goodput gain of Swing over the best-known algorithm at each allreduce size and for networks ranging from \num{64} to \num{16384} nodes. We observe that Swing outperforms state-of-the-art algorithms regardless of the network size, for up to 32MiB allreduce. 

\begin{figure}[hbtp]
        \begin{center}
            \includegraphics[width=\columnwidth]{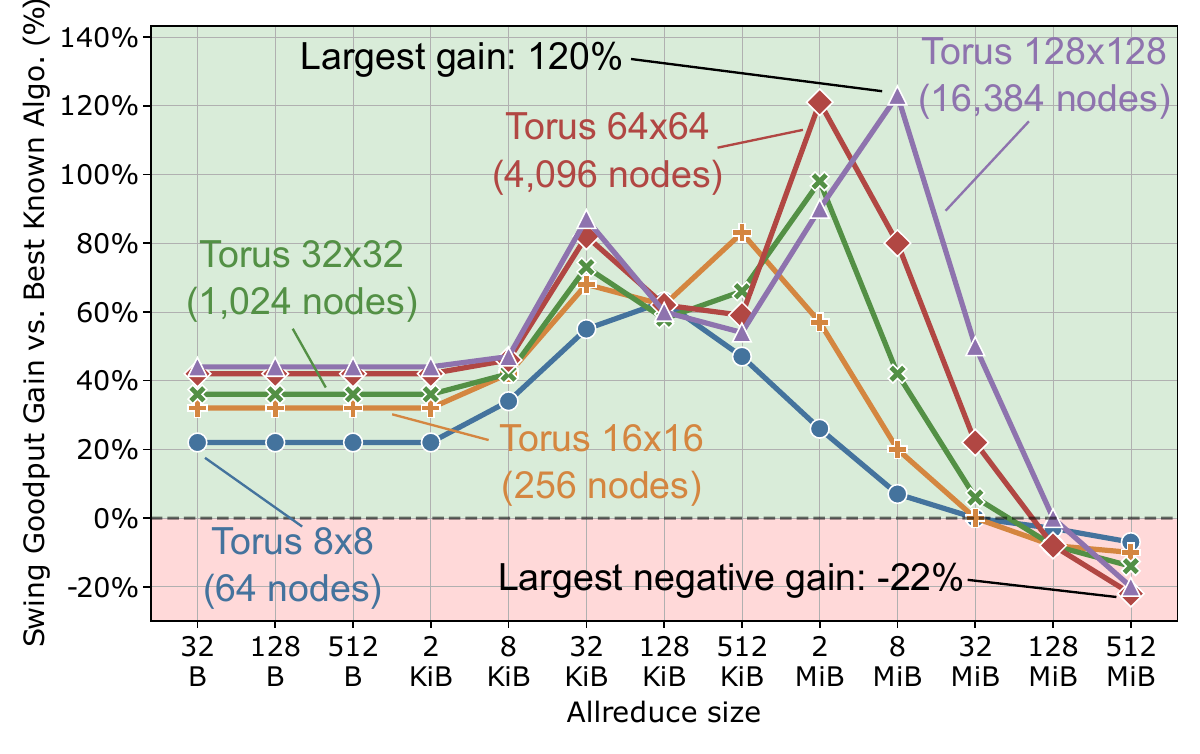}
        \end{center}
        \caption{Swing goodput gain on square torus networks ranging from \num{64} to \num{16384} nodes.}
        \label{img:scaling}
\end{figure}

Moreover, we can see that the maximum Swing gain increases when increasing the network size. Indeed, the larger the network, the larger the impact of latency deficiency on performance. As a consequence, both the bucket and ring algorithm performance decreases when increasing the number of nodes (and the Swing gain increases). 

Nevertheless, the bucket algorithm outperforms Swing for large vectors ($\geq$ 128MiB). Indeed, when increasing the network size, Swing congestion deficiency increases. However, as estimated in Table~\ref{tab:deficiency:torus} and as shown in the figure, on 2D torus, we expect at most a negative gain of around $20\%$ (i.e., a peak bandwidth of around $80\%$) regardless of network size.

\subsubsection{Bandwidth Impact}
To analyze Swing performance for different network bandwidths, we show in Fig.~\ref{img:scaling_bw} the goodput gain for 8x8 torus networks with bandwidth ranging from $100$ Gb/s to $3.2$ Tb/s. We observe consistent gains over the best-known state-of-the-art algorithm regardless of the network bandwidth. 

For low bandwidths, the relative impact of bandwidth and congestion deficiencies on performance is higher, and the gain of Swing over recursive doubling for small messages increases. At higher bandwidth, the relative impact of congestion deficiency is lower, and the maximum gain of Swing for small allreduce decreases. At the same time, however, Swing is not outperformed anymore by the bucket algorithm for large allreduce. For example, on $3.2$ Tb/s networks, Swing outperforms all the other algorithms even for 512MiB allreduce. %

\begin{figure}[hbtp]
        \begin{center}
            \includegraphics[width=\columnwidth]{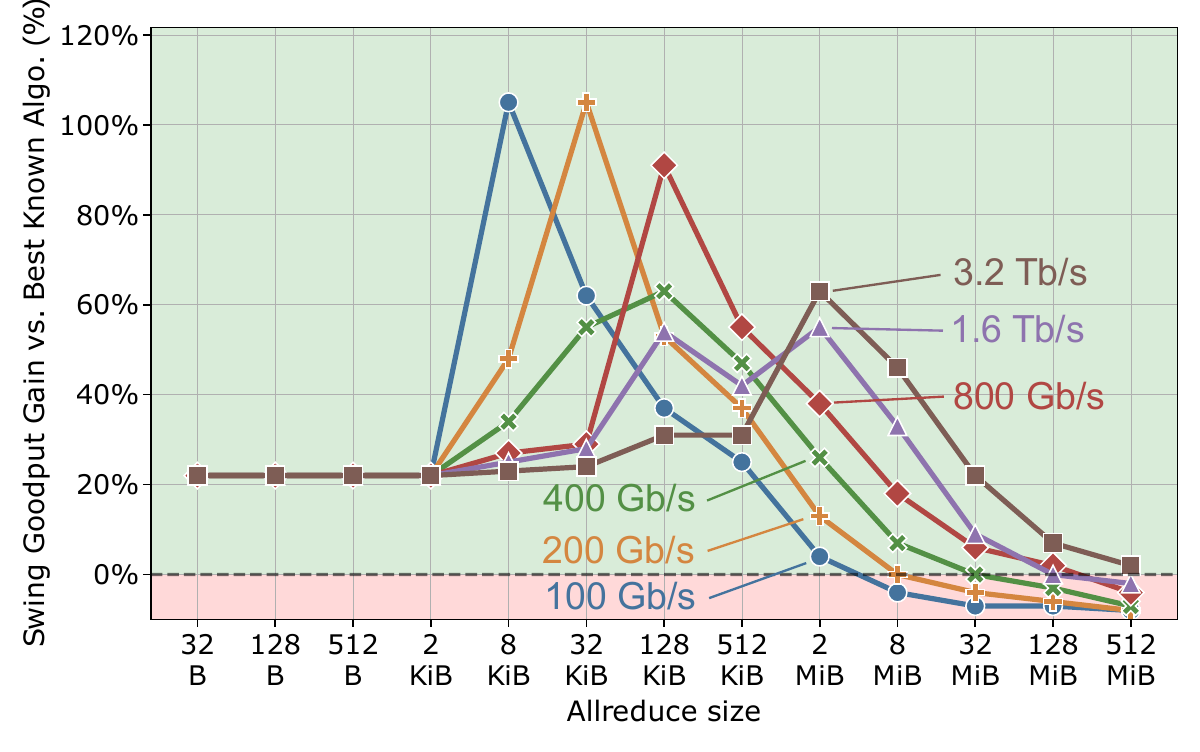}
        \end{center}
        \caption{Swing goodput gain on 8x8 torus networks with network bandwidth ranging from $100$ Gb/s to $3.2$ Tb/s.}
        \label{img:scaling_bw}
\end{figure}

\subsection{Performance on Rectangular Tori}\label{sec:evaluation:rectangular}
As discussed in Sec.~\ref{sec:swing:rectangular}, the congestion deficiency of the Swing algorithm increases if the torus dimensions are not all equal, proportionally to the ratio between the sizes of the largest and smallest dimension. The ring algorithm is instead unaffected by the shape of the torus, because the Hamiltonian rings span over all the nodes (as long as the conditions discussed in Sec.~\ref{sec:background:ring} are satisfied). 

On the other hand, the shape of the torus negatively impacts the bucket algorithm's latency deficiency.
Indeed, if some torus dimensions are larger than the others, some of the $2D$ concurrent collectives might move from dimension $i$ to dimension $i+1$, whereas there are still collectives running on dimension $i+1$. We show this through an example in Fig.~\ref{img:bucket_rect}, where at step $1$, one bidirectional ring moves from the vertical to the horizontal dimension while the other bidirectional ring is still running.

\begin{figure}[htpb]
        \begin{center}
            \includegraphics[width=.7\columnwidth]{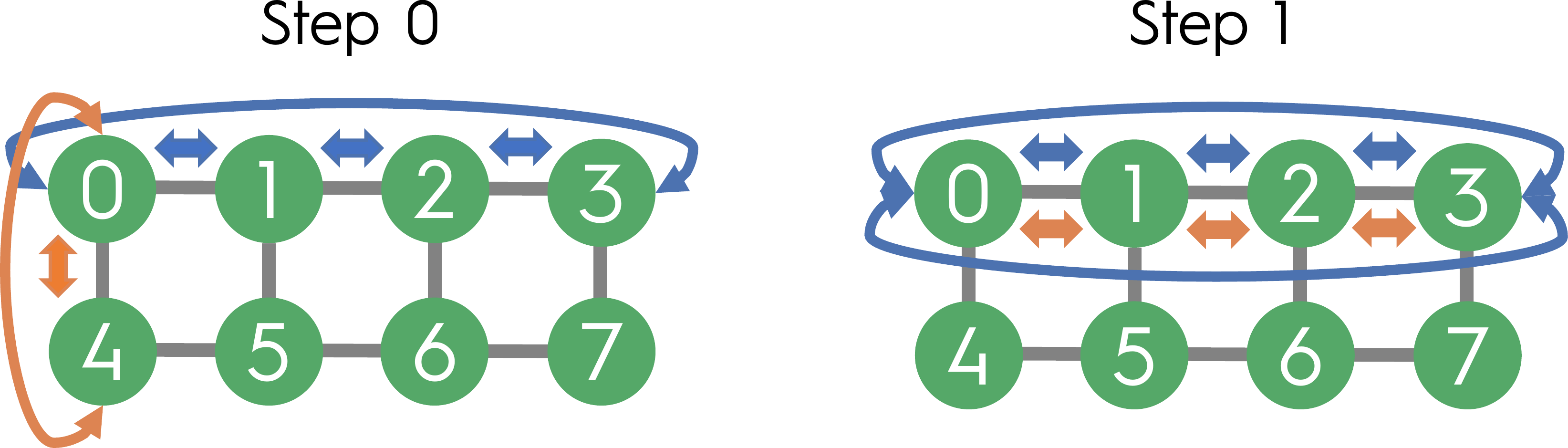}
        \end{center}
        \caption{First steps of the Bucket algorithm on a 2x4 torus.}
        \label{img:bucket_rect}
\end{figure}

This is usually detrimental to performance, and it is better if all the collectives synchronously move from one dimension to the next one~\cite{10.1145/2686882}. Thus, each step is completed only after all the collectives running on the largest dimension are completed. Hence, if $d_{max}$ is the size of the largest dimension, the latency deficiency becomes $\defl = \frac{2D \cdot d_{max}}{\log_2{p}}$. Bandwidth and congestion deficiencies are instead unaffected. In a nutshell, this means that the latency deficiency of the bucket algorithm is the same as that for a $d_{max} \times d_{max} \cdots \times d_{max}$ torus.

For these reasons, we show in Fig.~\ref{fig:diff_rect} the goodput of the different algorithms for different torus networks, all with \num{1024} nodes but with different rectangular shapes. First, we observe that, as expected, the ring algorithm is unaffected by the shape of the torus, and outperforms both the bucket and Swing algorithms for allreduce larger than 512MiB. 
On the other hand, the latency deficiency of the bucket algorithm increases proportionally to the ratio between the largest and the smallest dimensions, reducing its performance for small and medium vectors. 
This is visible, for example, by analyzing how the goodput for large allreduce decreases when moving from a 64x16 to a 256x4 torus. 

Last, Swing performance decreases compared to a square torus due to its higher congestion deficiency for rectangular torus networks. Nevertheless, we observe that Swing still outperforms the other algorithms up to 32MiB regardless of the network shape (up to 3x on the 128x8 and 256x4 torus).

\begin{figure}[hbtp]        
    \centering\subfigure{\includegraphics[width=\columnwidth]{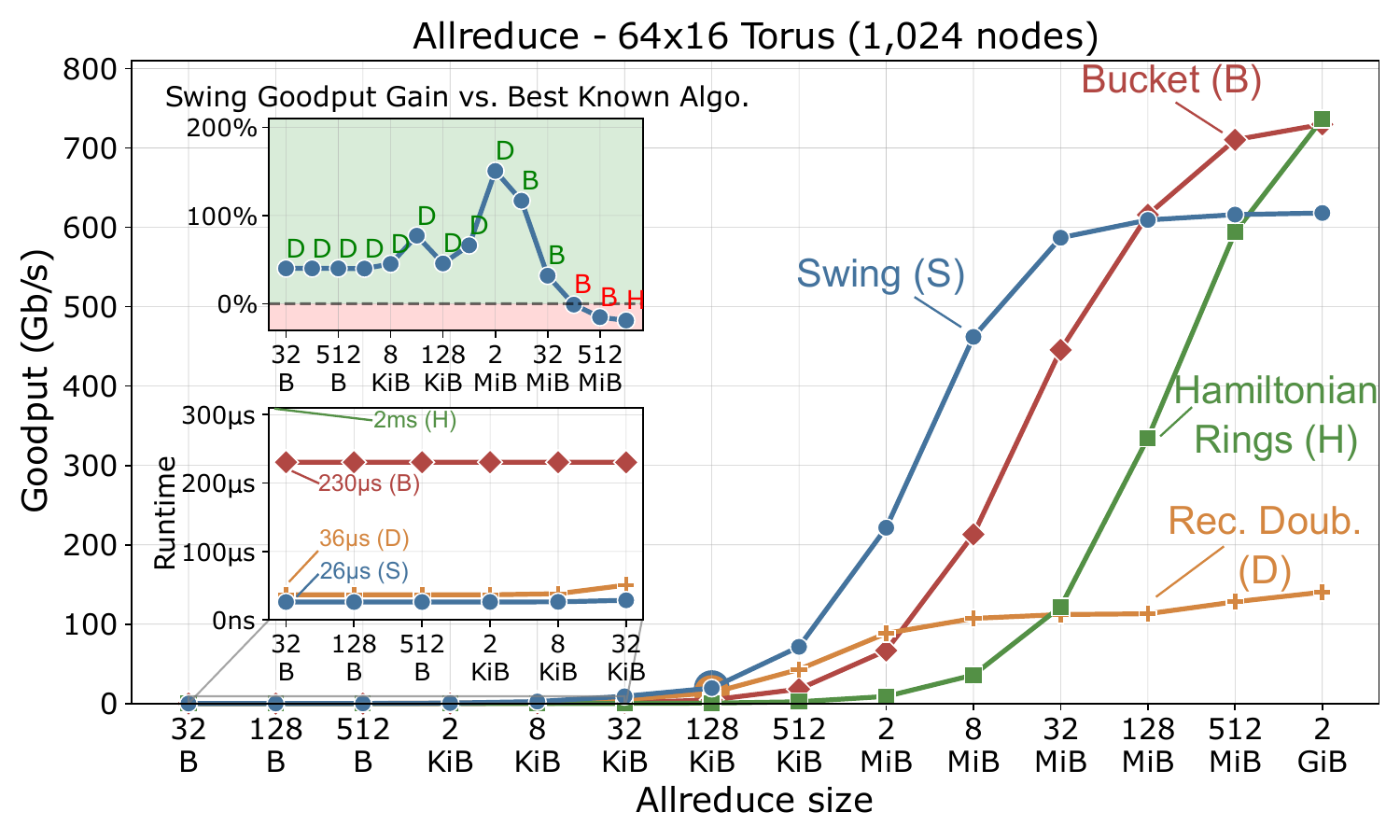}} 
    \centering\subfigure{\includegraphics[width=\columnwidth]{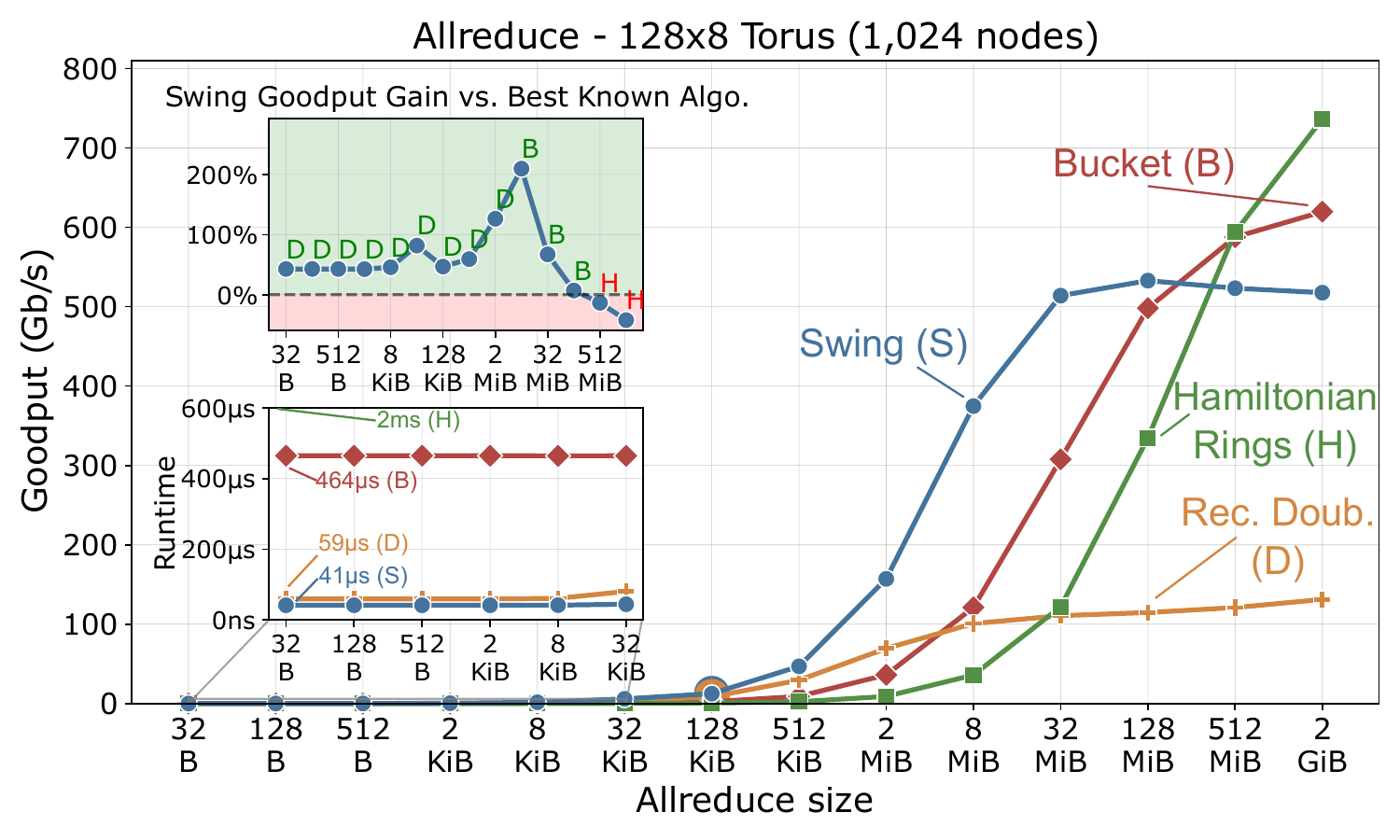}} 
    \centering\subfigure{\includegraphics[width=\columnwidth]{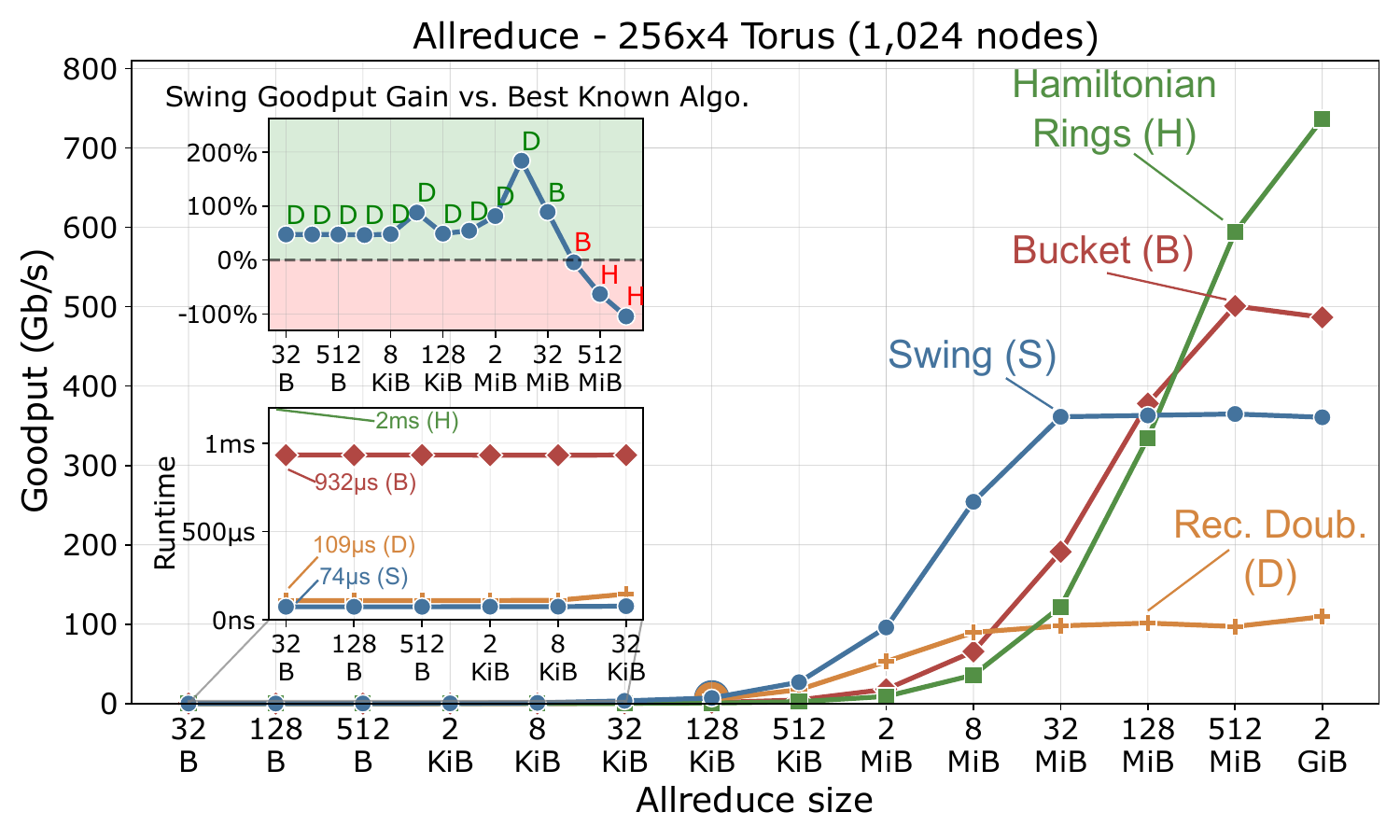}} 
    \caption{Goodput on 2D torus with \num{1024} nodes and different rectangular shapes: 64x16, 32x8, and 256x4.}
    \label{fig:diff_rect}
\end{figure}

\subsection{Performance for 3D and 4D Torus}\label{sec:evaluation:34d}
As discussed in Sec.~\ref{sec:swing:multid} and summarized in Table~\ref{tab:deficiency:torus}, the performance of the allreduce algorithm for multidimensional torus also depends on the number of dimensions. Thus, we evaluate the performance of the different allreduce algorithms on $8^2$, $8^3$, and $8^4$ torus networks. 

\begin{figure}[htpb]
    \centering\subfigure{\includegraphics[width=\columnwidth]{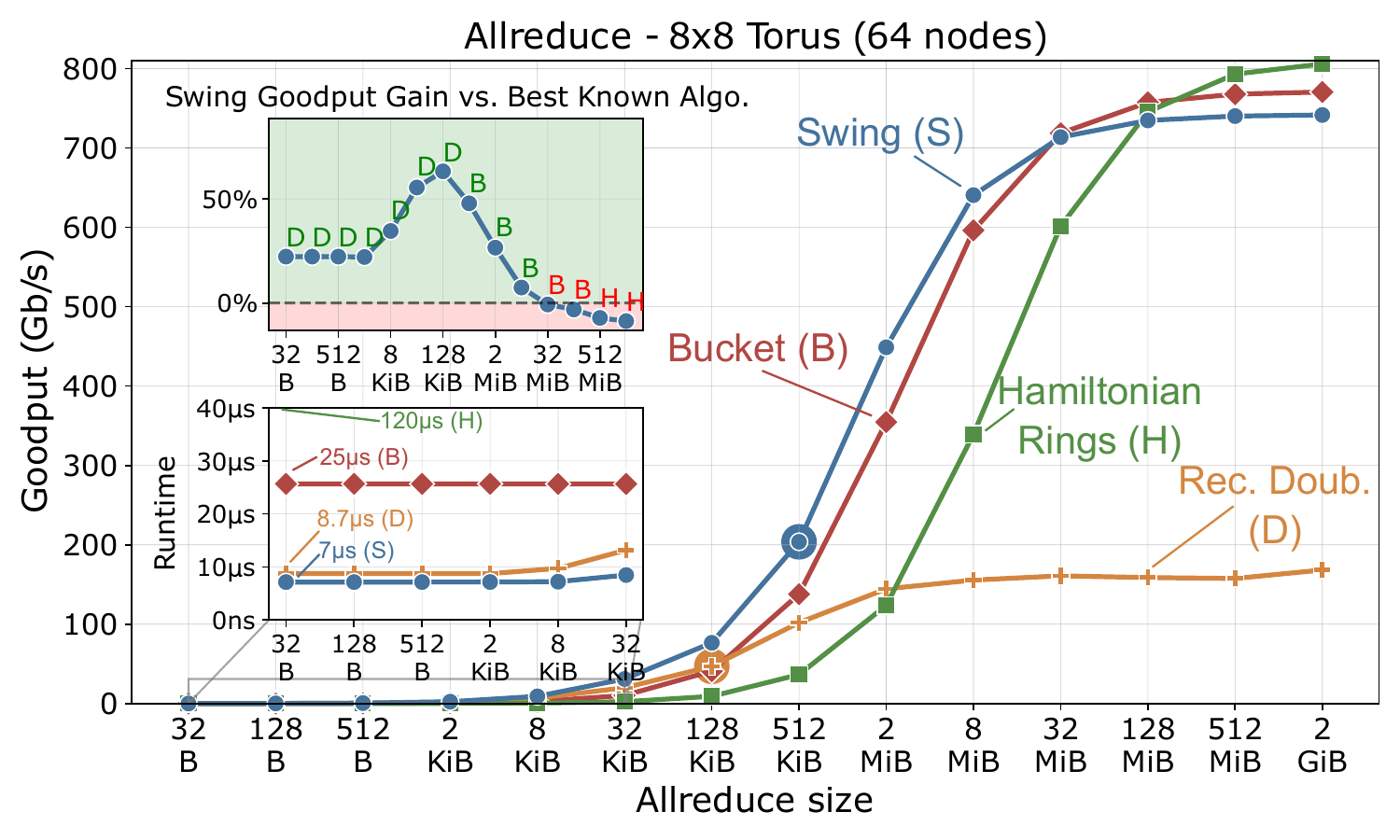}} 
    \centering\subfigure{\includegraphics[width=\columnwidth]{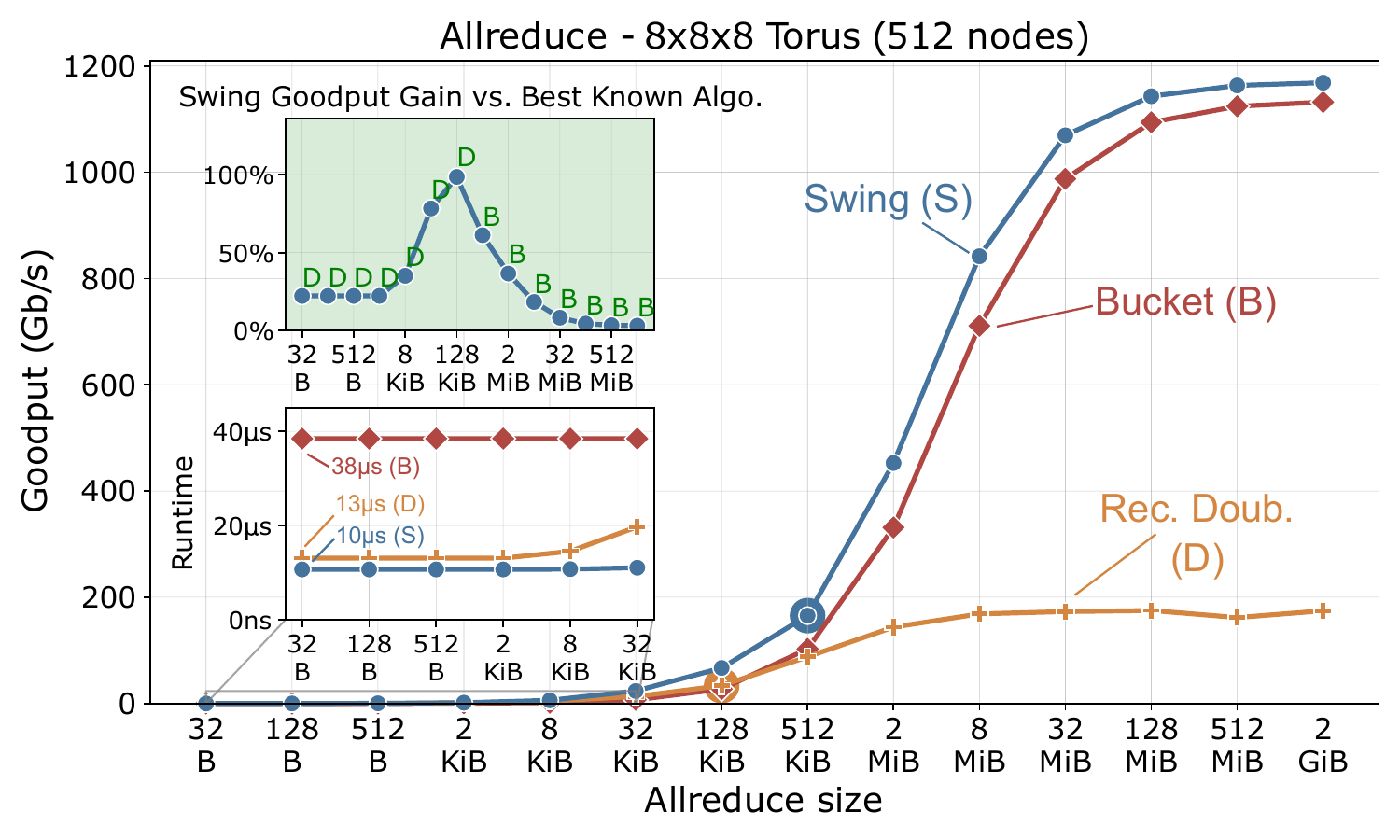}} 
    \centering\subfigure{\includegraphics[width=\columnwidth]{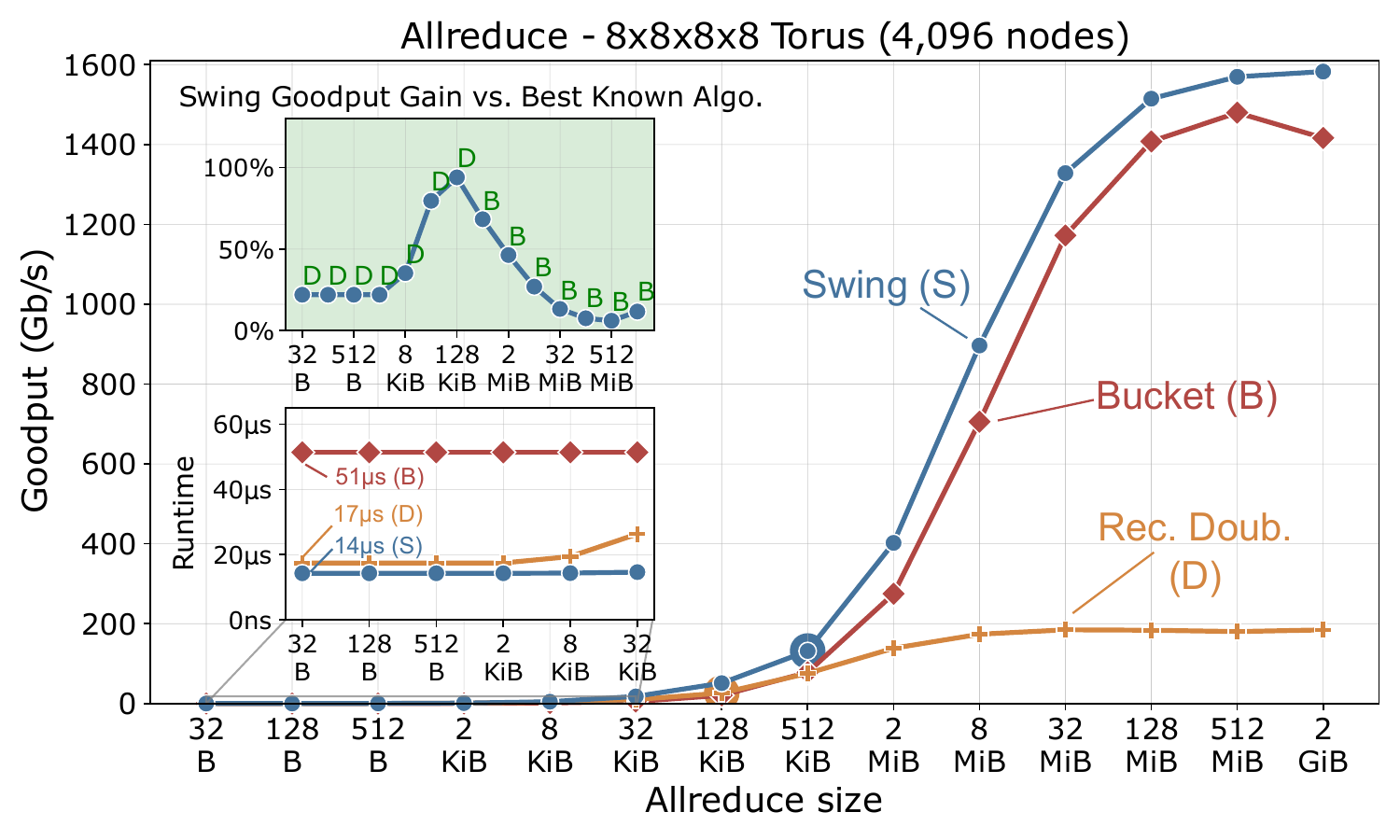}} 
    \caption{Goodput on higher-dimensional torus networks: 2D 8x8, 3D (8x8x8), and 4D (8x8x8x8).}
    \label{fig:diff_sizes}
\end{figure}

We report the evaluation result in Fig.~\ref{fig:diff_sizes}. We do not include the Hamiltonian ring algorithm in the 3D and 4D torus results since it only works for 2D torus networks. 
When increasing the number of dimensions, the goodput gain of Swing increases because, as shown in Table~\ref{tab:deficiency:torus} and discussed in Sec.~\ref{sec:swing:multid}, the congestion deficiency drops to $3\%$ on 3D torus and to $0.8\%$ on 4D torus. Consequently, \textit{{for 3D and 4D torus networks, Swing outperforms by up to 2x all existing algorithms on allreduce ranging from 32B to 2GB}}.

\subsection{Performance on Torus-Like Topologies}\label{sec:evaluation:hx}
Some topologies like HammingMesh~\cite{10.5555/3571885.3571899} and HyperX~\cite{10.1145/1654059.1654101,10.1145/3295500.3356140} extend torus by adding additional links, thus increasing the network bisection bandwidth. Seen from a different perspective, those extra links allow distant nodes to communicate crossing fewer hops, decreasing Swing congestion deficiency.

\subsubsection{Performance on HammingMesh}
HammingMesh~\cite{10.5555/3571885.3571899} groups nodes into square \textit{boards}. Each board is a 2D mesh, and nodes on the same column (or row) located at the edge of the boards are connected together using fat trees. Due to its higher performance and flexibility compared to a torus a similar topology is used, for example, to interconnect TPUv4 devices~\cite{jouppi2023tpu}. Because of the extra links, the congestion deficiency of Swing on a HammingMesh is lower than that on a 2D torus. Moreover, for a fixed number of nodes, having smaller boards increases the number of extra (fat tree) links and, thus, decreases the congestion deficiency.

\begin{figure}[h]
        \begin{center}
            \includegraphics[width=\columnwidth]{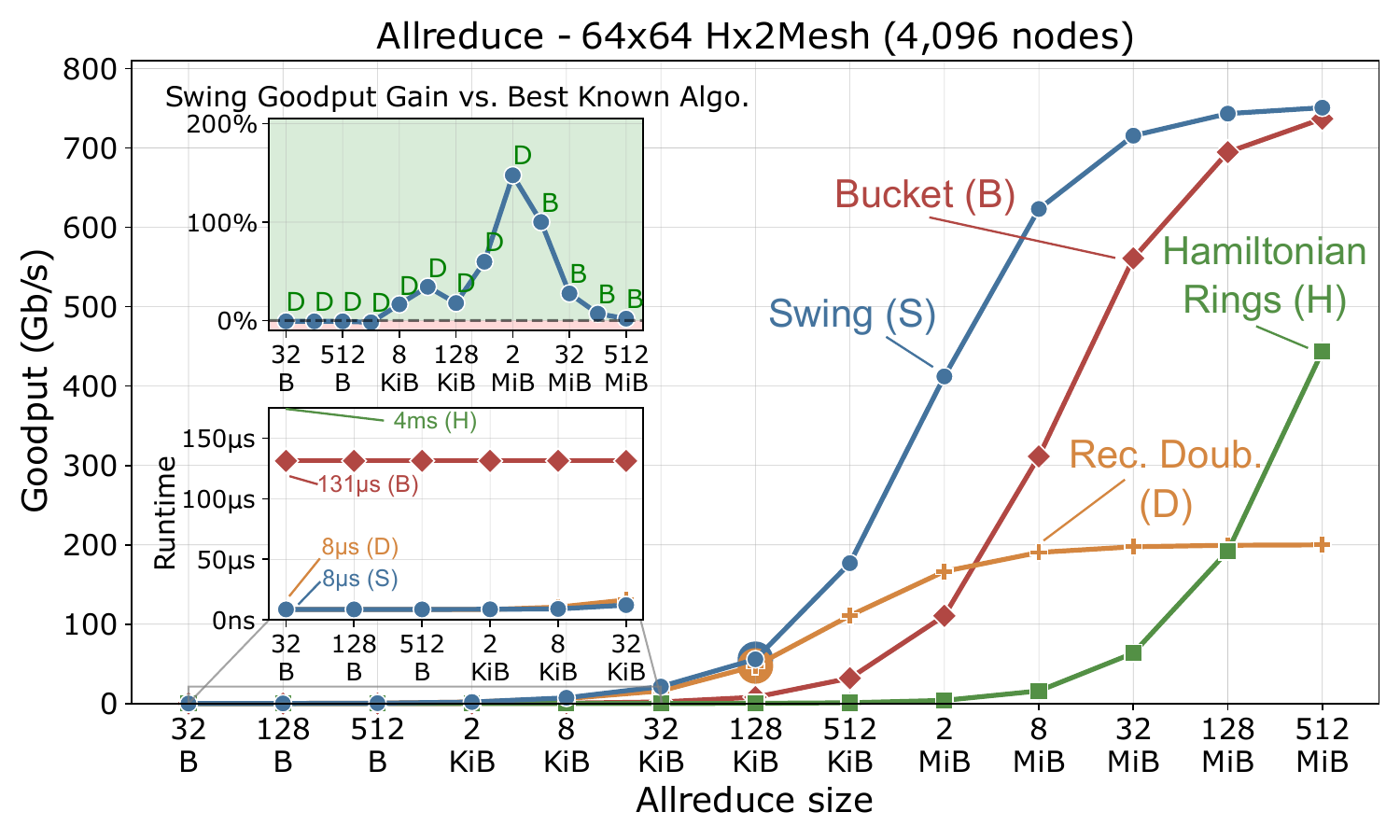}
        \end{center}
        \caption{Goodput on a \num{4096} nodes Hx2Mesh.}
        \label{img:hx2_4096}
\end{figure}

We show in Fig.~\ref{img:hx2_4096} the performance of the different algorithms for a Hx2Mesh network with \num{4096} nodes (2x2 boards arranged in a 32x32 configuration). For such configuration, Swing outperforms the state-of-the-art algorithms at any size, up to 2.5x for 2MiB allreduce. Moreover, because of the lower congestion deficiency, we observe how the peak Swing performance is higher compared to a 2D torus with the same number of nodes (Fig.~\ref{img:torus_4096}). Last, we also observe a runtime reduction for all the algorithms for small vectors, since nodes on the same board on HammingMesh are connected through PCB traces, with lower latency than optical network cables.

In Fig.~\ref{img:hx4_4096} we instead report the results for a HammingMesh with the same number of nodes (\num{4096}), but using 4x4 boards (4x4 boards arranged in a 16x16 configuration). 
This configuration is a middle point between the torus topology and the Hx2Mesh since it has more extra links than a torus, but fewer than a Hx2Mesh. Due to fewer extra links compared to Hx2Mesh, on Hx4Mesh Swing has a higher congestion deficiency, as we can see starting from 128MiB allreduce. 

\begin{figure}[h]
        \begin{center}
            \includegraphics[width=\columnwidth]{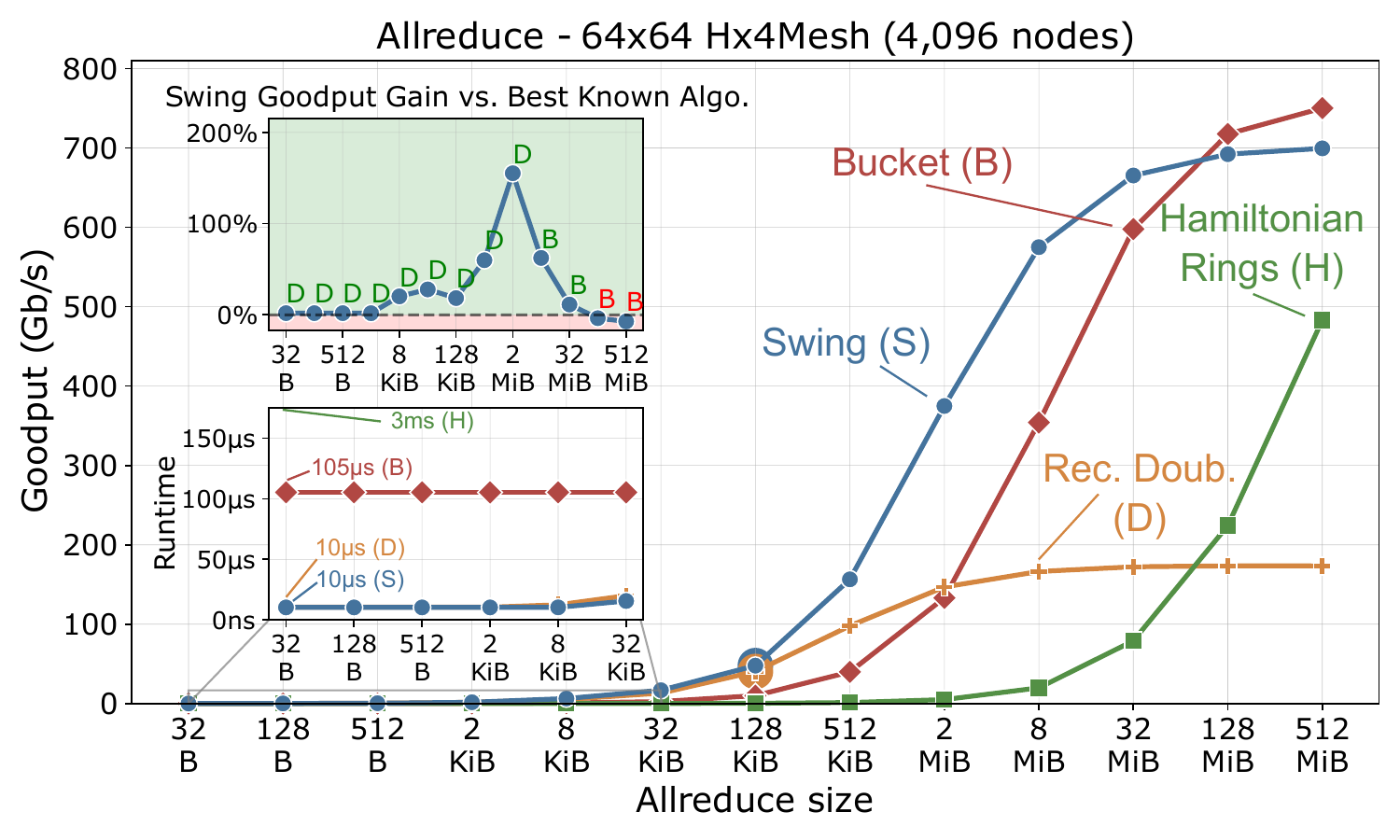}
        \end{center}
        \caption{Goodput on a \num{4096} nodes Hx4Mesh.}
        \label{img:hx4_4096}
\end{figure}

\subsubsection{Performance on 2D HyperX}
Last, we report in Fig.~\ref{img:hyperx} the performance on a \num{4096} 2D HyperX topology~\cite{10.1145/1654059.1654101} (which can be seen as a HammingMesh with 1x1 boards~\cite{10.5555/3571885.3571899}). HyperX connects each node to every node in the same row and column. Because in Swing each node communicates only with nodes on the same row (or the same column), on HyperX Swing does not experience any congestion deficiency. We indeed observe from the plot that Swing outperforms all the other algorithms at any allreduce size. Moreover, the maximum goodput gain increases from 2.5x in Hx2Mesh and Hx4Mesh, to 3x in HyperX.

\begin{figure}[h]
        \begin{center}
            \includegraphics[width=\columnwidth]{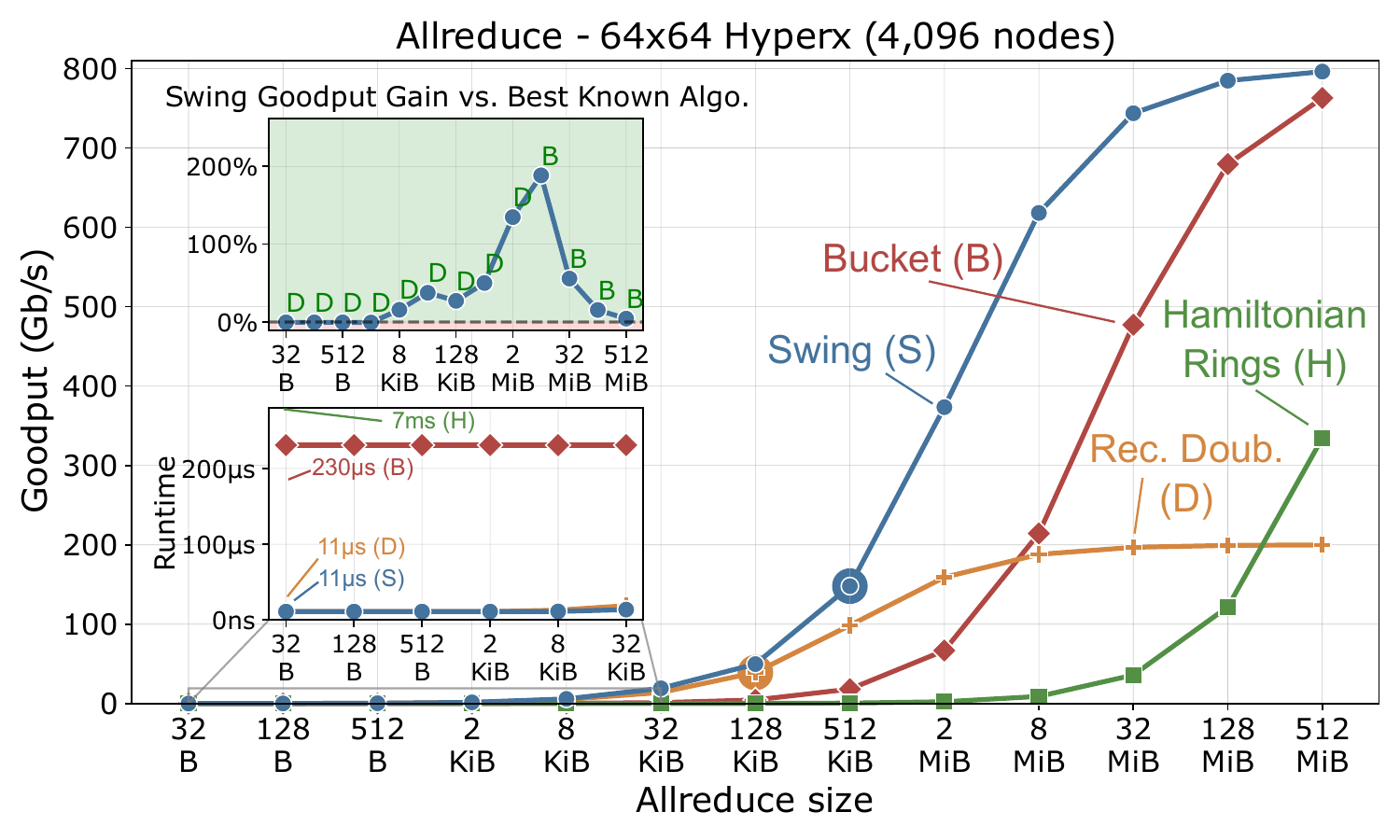}
        \end{center}
        \caption{Goodput on a \num{4096} nodes HyperX.}
        \label{img:hyperx}
\end{figure}

\subsection{Summary}\label{sec:evaluation:summary}
Last, we summarize all the presented results in Fig.~\ref{img:summary_gain}, where we report the distribution of Swing goodput gain over the best algorithm at each message size for the different scenarios we analyzed. We show data for allreduce sizes up to 512MiB since these are the sizes practically used in existing HPC~\cite{8665758} and machine learning workloads~\cite{10.14778/3415478.3415530} and larger allreduce are usually split into smaller ones to overlap communication and computation better. 

For each box, the triangle shows the median across the allreduce sizes. The left and right sides of the box represent the first (\textit{Q1}) and third (\textit{Q3}) quartile, respectively. The left whisker denotes the smallest point larger than $Q1 - 1.5 \cdot (Q3-Q1)$, whereas the right whisker the biggest point smaller than $Q3 + 1.5 \cdot (Q3-Q1)$. Empty dots outside the whiskers are considered outliers.

\begin{figure}[h]
        \begin{center}
            \includegraphics[width=\columnwidth]{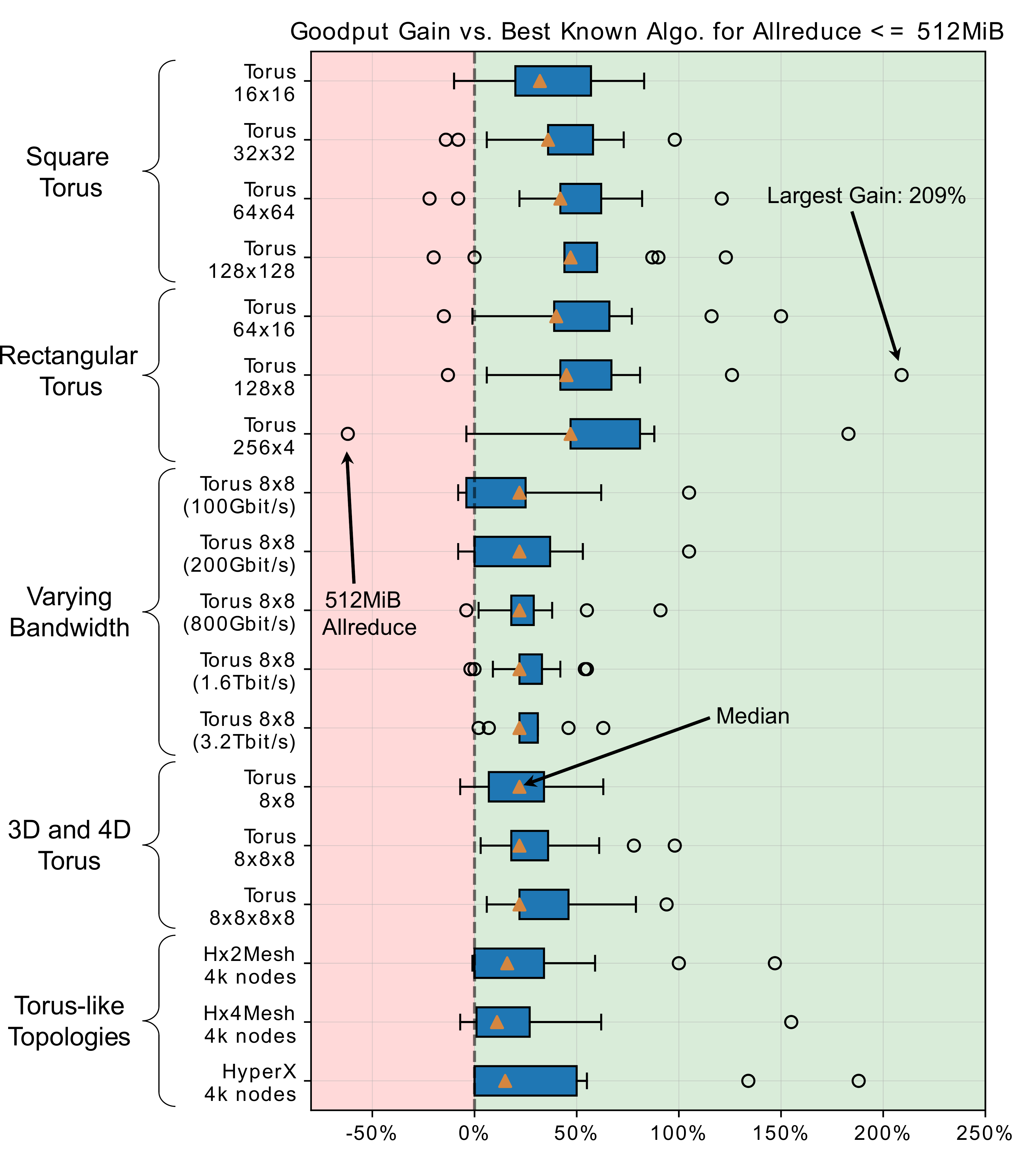}
        \end{center}
        \caption{Summary of Swing goodput gain on different topologies.}
        \label{img:summary_gain}
\end{figure}

First, by observing the performance on square torus networks, we see that Swing median and maximum goodput gain increases with the network size (going from 8x8 to 128x128 torus). As discussed in Sec.~\ref{sec:evaluation:2d}, the maximum negative gain also increases with network size, but is at most around $20\%$ and only occurs for large allreduce ($\geq$ 128MiB).

For rectangular torus networks, Swing median and maximum goodput gain increase proportionally to the ratio between the maximum and minimum-sized dimension  (i.e., going from 64x16 to 256x4 torus), up to 3x for 128x8 torus. For 256x4 torus, Swing performs around $60\%$ worse than the ring algorithm on 512MiB allreduce.

Increasing the network bandwidth has a positive effect on Swing goodput gain. Although the maximum goodput gain decreases, for higher bandwidths, Swing outperforms all the state-of-the-art algorithms at any allreduce size. Regardless of the bandwidth, the median Swing goodput gain across different allreduce sizes stands around $25\%$.

Swing goodput gain also increases with the number of dimensions (moving from 8x8 to 8x8x8x8 torus) since each node has additional one-hop distant nodes. Because in the reduce-scatter nodes communicate first with their neighbors, and halve the data size at each step, Swing can further reduce the transmitted data size before nodes need to communicate with more distant ones. On one side, this increases the maximum goodput gain up to 2x. On the other side, it outperforms all the other algorithms at any size. 

Last, Swing provides consistent performance gains even on other torus-like topologies such as HammingMesh and HyperX. When comparing HammingMesh with a torus with the same number of nodes (i.e., 64x64), we observe higher goodput gains on HammingMesh, due to the extra links compared to the torus, which helps in reducing Swing congestion deficiency. Also, on both Hx2Mesh and HyperX, Swing outperforms the other algorithms regardless of the allreduce size.

Overall we observe a median goodput ranging between $20\%$ and $50\%$, and a maximum goodput gain of 3x. This underlines the advantages of using the Swing allreduce algorithm on torus and torus-like networks of any shape and number of dimensions and at any network bandwidth.

\section{Discussion}
\paragraph{Extension to Other Collectives}
Although in this work we focus on allreduce (and, indirectly, on reduce-scatter and allgather), a similar approach can be adopted for other collective operations. Namely, Swing can replace the recursive doubling algorithm for all those collectives where it is used (e.g., broadcast and reduce).

\paragraph{Routing Impact}
In all our experiments, we used minimal adaptive routing. I.e., packets are forwarded on the least congested shortest path. Because in all the analyzed algorithms, each node only communicates with others on the same row/column, traffic is evenly distributed across links. Sending packets on non-shortest paths would unnecessarily increase network traffic and decrease performance.

\paragraph{Swing Performance on Full-Bandwidth Topology}
On full-bandwidth topologies (e.g., non-blocking fat trees), both Swing and recursive doubling will not have any congestion deficiency, and we expect them to have the same performance.

\section{Conclusions}
Due to the relevance of torus networks in high-performance machine learning systems, 
in this work, we presented \textit{Swing}, a new allreduce algorithm for torus networks. To motivate Swing design, we modeled the latency, bandwidth, and congestion deficiencies, of Swing and of the best state-of-the-art algorithms. Our modeling highlights the shortcomings of existing allreduce algorithms for torus networks, especially for small- and medium-sized allreduce.

We then presented the Swing design. Swing performs a logarithmic number of steps and transmits a minimal number of bytes. To reduce the impact of the torus low bisection bandwidth, Swing shortcuts the torus, reducing the distance between communicating nodes and, thus, the congestion deficiency.
Last, we extensively evaluated Swing performance and compared it against the best state-of-the-art algorithms, for different node counts, network bandwidths, shapes, number of dimensions, and topologies. Our evaluation shows improvements up to 3x on all practically used allreduce sizes.%

\section*{Acknowledgments}
We thank the anonymous NSDI reviewers and
our shepherd, Philip Levis, for their constructive feedback. This work was supported by Sapienza University under the SEED-2022 and "Progetti Grandi 2023" funding schemes; by UrbanTwin, an ETH Board Joint Initiatives project; and by EuroHPC-JU under grant agreement RED-SEA, No 055776. Daniele De Sensi is a member of the \textit{Gruppo Nazionale Calcolo Scientifico - Istituto Nazionale di Alta Matematica} (GNCS-INdAM).

\bibliographystyle{plain}
\bibliography{main}

\appendix
\section{Correctness Proof}\label{sec:proof}
In this section we first prove the correctness of the Swing algorithm when the number of nodes is a power of two (Sec.~\ref{sec:proof:power}) and then extend it for an arbitrary number of nodes (Sec.~\ref{sec:proof:nonpower}).

\subsection{Power of Two Number of Nodes}\label{sec:proof:power}
For clarity reasons, we prove the correctness of the latency-optimal algorithm (both the latency- and bandwidth-optimal algorithms perform the same communication pattern, although nodes transmit different data). To prove the algorithm's correctness, we need to prove that the data transmitted by each node eventually reaches all the other nodes. For example, on a 1D torus with $p$ nodes, at step $0$, node $0$ communicates with node $1$ (which aggregates the received data with its own). At step $1$, node $0$ communicates with node $-1 \bmod p = p-1$, whereas node $1$ communicates with node $2$. Thus, we can say that at step $1$ the data sent from $0$ reached nodes  $\{1, 2, p-1\}$ ($2$ has been reached \textit{indirectly} through node $1$). 

Because the number of reached nodes doubles at each step, and because we perform $\log_2{p}$ steps, the data sent from any given node would eventually reach $p-1$ nodes. We need, however, to prove that those $p-1$ nodes are distinct (i.e., that the data sent by each node reach every other node exactly once and is thus never aggregated twice). To do so, we need first to prove a few lemmas.

\begin{lemma}\label{rhoodd}
$\rho(s)$ and $\delta(s)$ are odd $\forall s \in \mathbb{N}$.
\end{lemma}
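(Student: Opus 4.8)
The plan is to reduce the statement about $\delta$ to the statement about $\rho$. Since $\delta(s) = |\rho(s)|$ and an integer has the same parity as its absolute value, it suffices to prove that $\rho(s)$ is odd for every $s \in \mathbb{N}$ (taking $\mathbb{N}$ to include $0$, consistent with steps being indexed from $0$).

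For $\rho(s)$ I would argue by induction on $s$, using the one-term recurrence $\rho(0) = (-2)^0 = 1$ and $\rho(s) = \rho(s-1) + (-2)^s$ for $s \ge 1$, which is immediate from $\rho(s) = \sum_{i=0}^{s} (-2)^i$. The base case $\rho(0) = 1$ is odd. For the inductive step, when $s \ge 1$ the added term $(-2)^s$ is even, so $\rho(s) = \rho(s-1) + (-2)^s$ has the same parity as $\rho(s-1)$, which is odd by the induction hypothesis. Equivalently, and without invoking induction, one can simply observe that $\rho(s) = 1 + \sum_{i=1}^{s} (-2)^i$ is an odd number plus a sum of even numbers, hence odd; this also makes clear why the closed form $\rho(s) = \frac{1-(-2)^{s+1}}{3}$ is an integer and is odd (equivalently, $3\,\delta(s) = 2^{s+1} - (-1)^{s+1}$ is (even) $-$ (odd) $=$ odd, so the cofactor $\delta(s)$ is odd as well).

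I do not expect any genuine obstacle: this is an elementary parity fact, and both the inductive and the direct argument are short. The only points deserving a word of care are confirming the convention $0 \in \mathbb{N}$ so that $\rho(0) = 1$ is the correct base case, and the trivial remark that $|x|$ and $x$ have the same parity, which is what transfers the conclusion from $\rho(s)$ to $\delta(s)$.
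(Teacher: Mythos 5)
Your direct observation that $\rho(s) = 1 + \sum_{i=1}^{s}(-2)^i$ is an odd number plus a sum of even numbers is exactly the paper's argument, and the parity transfer from $\rho$ to $\delta$ via $\delta(s)=|\rho(s)|$ is the obvious closing step. The inductive reformulation is a harmless repackaging of the same idea, so the proposal matches the paper's proof in substance.
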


\begin{proof}
$(-2)^i$ is odd for $i=0$, and even for $i>0$. The sum of even numbers with an odd number is odd.
\end{proof}

\begin{lemma}\label{evenodd}
If $r$ is even, $\pi(r, s)$ is odd, and vice versa.
\end{lemma}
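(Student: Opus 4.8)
The plan is to prove Lemma~\ref{evenodd} by analyzing the parity of $\pi(r,s)$ directly from its defining formula in Eq.~\ref{eq:pi}, using Lemma~\ref{rhoodd} which we have already established. The key observation is that $\pi(r,s)$ is obtained from $r$ by adding or subtracting $\rho(s)$ and then reducing modulo $p$; since $p$ is even (it is a power of $2$), reduction modulo $p$ preserves parity, so it suffices to track the parity of $r \pm \rho(s)$.

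First I would split into the two cases of the definition. If $r$ is even, then $\pi(r,s) = r + \rho(s) \bmod p$. Since $\rho(s)$ is odd by Lemma~\ref{rhoodd}, the sum $r + \rho(s)$ is even plus odd, hence odd; and because $p$ is even, $r + \rho(s)$ and $(r+\rho(s)) \bmod p$ have the same parity, so $\pi(r,s)$ is odd. Symmetrically, if $r$ is odd, then $\pi(r,s) = r - \rho(s) \bmod p$; here $r - \rho(s)$ is odd minus odd, hence even, and again reduction modulo the even number $p$ preserves parity, so $\pi(r,s)$ is even. This covers both directions of the ``and vice versa'' claim.

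I do not expect any real obstacle here: the only subtlety worth stating explicitly is that $p$ being even is what makes the modular reduction parity-preserving, and this holds throughout Sec.~\ref{sec:proof:power} since $p$ is assumed to be a power of two. One could phrase the whole argument in one line as: $\pi(r,s) \equiv r \pm \rho(s) \equiv r + 1 \pmod 2$ because $\rho(s)$ is odd and $p$ is even, so $\pi(r,s)$ and $r$ always have opposite parity.
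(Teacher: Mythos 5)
Your proof is correct and follows the same approach as the paper: both use Lemma~\ref{rhoodd} to establish that $\rho(s)$ is odd and rely on $p$ being even (a power of two) so that reduction modulo $p$ preserves parity. Your version is slightly more explicit about the parity-preservation step, but the argument is identical in substance.
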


\begin{proof}
An even node $r$ communicates at step $s$ with node $\pi(r, s) = r + \rho(s) \bmod{p}$. 
Because $p$ is a power of two (thus even), and $\rho(s)$ is odd (Lemma~\ref{rhoodd}), 
$\pi(r, s)$ is odd. Vice versa, odd nodes communicate with even nodes.
\end{proof}

If a node $r$ communicates at step $s$ with a node $q = \pi(r, s)$, and $q$ communicates with a node $z = \pi(q, h)$ at step $h > s$, we say that $s$ indirectly reached node $z$. Because even nodes only communicate with odd nodes (and vice versa), if $r$ is even, we can rewrite:
\begin{equation}
z = \underbrace{\overbrace{(r + \rho(s) \bmod{p})}^{q = \pi(r, s)} - \rho(h) \bmod{p}}_{\pi(q, h)} = r + \rho(s) - \rho(h) \bmod{p} \nonumber
\end{equation}
I.e., the sign behind $\rho(s)$ alternates between positive and negative, starting from positive. In general, an even node $r$ can reach through a sequence of $k$ steps $\{s_0 < s_1 < s_2 < \ldots < s_{k-1}\}$ a node $q$, with:
\begin{align}
\begin{split}
q &= r + \rho(s_0) - \rho(s_1) + \rho(s_2) - \ldots \bmod{p} = \\
  &= (r + \sum_{i=0}^{k-1} -1^{i}\rho(s_i)) \bmod{p} \nonumber
\end{split}
\end{align}
The same applies if $r$ is odd, by replacing $-1^i$ with $-1^{i+1}$.

\begin{lemma}\label{oddsteps}
Even nodes reach (directly or indirectly) odd nodes through an odd number of steps $k$. Odd nodes reach (directly or indirectly) even nodes through an odd number of steps $k$.
\end{lemma}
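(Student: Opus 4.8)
The plan is to reduce everything to a parity statement about the quantity $\rho(s)$, exactly in the spirit of Lemmas~\ref{rhoodd} and~\ref{evenodd}. Recall from the discussion immediately preceding the statement that if an even node $r$ reaches a node $q$ through a sequence of $k$ steps $\{s_0 < s_1 < \ldots < s_{k-1}\}$, then
\begin{equation}
q = \Big(r + \sum_{i=0}^{k-1} (-1)^i \rho(s_i)\Big) \bmod p, \nonumber
\end{equation}
and the analogous formula with $(-1)^{i+1}$ holds when $r$ is odd. So I would start from this formula and simply analyze the parity of $\sum_{i=0}^{k-1} (-1)^i \rho(s_i)$.

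The key observation is that by Lemma~\ref{rhoodd} every $\rho(s_i)$ is odd, so modulo $2$ each term $(-1)^i \rho(s_i) \equiv 1$, and hence $\sum_{i=0}^{k-1} (-1)^i \rho(s_i) \equiv k \pmod 2$. Therefore $q \equiv r + k \pmod 2$ (using that $p$ is even, so reduction mod $p$ does not change parity). From this congruence the lemma is immediate: if $r$ is even and $q$ is odd, then $r + k$ must be odd, forcing $k$ to be odd; symmetrically if $r$ is odd and $q$ is even, then $r+k$ is odd again forces $k$ odd. I would write this out as: first state the parity identity $q \equiv r + k \pmod 2$, justifying it from Lemma~\ref{rhoodd} and the fact that $\pm 1$ times an odd number is odd; then read off both halves of the claim. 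The odd-node case is handled identically, since replacing $(-1)^i$ by $(-1)^{i+1}$ does not change the fact that each summand is odd, so the sum is still $\equiv k \pmod 2$.

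I do not expect any real obstacle here — this is a short parity argument once the reach-formula and Lemma~\ref{rhoodd} are in hand. The only point requiring a word of care is making sure the reduction $\bmod\, p$ is harmless for the parity bookkeeping; this is exactly where $p$ being a power of two (in particular even) is used, just as in the proof of Lemma~\ref{evenodd}. A secondary minor point is that the ``directly reached'' case is just $k=1$, which is odd, so it is consistent with and subsumed by the general statement; I might remark on this explicitly so the reader sees the base case is not special.
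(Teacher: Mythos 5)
Your proof is correct and is essentially the same parity argument the paper gives: both reduce to the observation (via Lemma~\ref{rhoodd}) that each summand $(-1)^i\rho(s_i)$ is odd, so the sum has the parity of $k$, and the reduction $\bmod\,p$ preserves parity because $p$ is even. You merely spell out the intermediate congruence $q \equiv r + k \pmod 2$ and a couple of minor bookkeeping points that the paper leaves implicit.
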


\begin{proof}
This stems from Lemma~\ref{evenodd}. If $r$ is even and $k$ is odd, then $q = (r + \sum_{i=0}^{k-1} -1^i \rho(s_i)) \bmod{p}$ is odd because $\rho(s)$ is always odd. Similarly, if $r$ is odd and $k$ is odd, $q$ is even.
\end{proof}

\begin{lemma}\label{lemma:modzero}
Given $k$ integers $\{e_0 < e_1 < \ldots < e_{k-1}\}$, with $e_{k-1} \leq \log_2(p) - 1$, then $-p < \sum_{i=0}^{k-1} (-2)^{e_i} < p$.
\end{lemma}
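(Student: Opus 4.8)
The plan is to prove the stronger, $p$-independent estimate $\bigl|\sum_{i=0}^{k-1}(-2)^{e_i}\bigr|\le 2^{e_{k-1}+1}-1$ and then simply substitute the hypothesis $e_{k-1}\le\log_2(p)-1$, which gives $2^{e_{k-1}+1}\le p$ and hence $\bigl|\sum_{i=0}^{k-1}(-2)^{e_i}\bigr|\le p-1<p$, i.e.\ $-p<\sum_{i=0}^{k-1}(-2)^{e_i}<p$ as claimed. Before doing anything, I would note that the $e_i$ are nonnegative integers (in the intended use they are step indices, and this is necessary for $(-2)^{e_i}$ to be an integer), so each $(-2)^{e_i}$ is a well-defined integer of absolute value $2^{e_i}$; the case $k=0$ is immediate since the empty sum is $0$ and $p\ge 1$.

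The key step is to peel off the term of largest exponent. Writing $\sum_{i=0}^{k-1}(-2)^{e_i}=(-2)^{e_{k-1}}+\sum_{i=0}^{k-2}(-2)^{e_i}$, I would observe that since $e_0<e_1<\cdots<e_{k-1}$ are distinct nonnegative integers, the exponents $e_0,\dots,e_{k-2}$ all lie in $\{0,1,\dots,e_{k-1}-1\}$, so by the triangle inequality and the geometric sum formula $\bigl|\sum_{i=0}^{k-2}(-2)^{e_i}\bigr|\le\sum_{i=0}^{k-2}2^{e_i}\le\sum_{j=0}^{e_{k-1}-1}2^{j}=2^{e_{k-1}}-1$. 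Applying the triangle inequality once more together with $\bigl|(-2)^{e_{k-1}}\bigr|=2^{e_{k-1}}$ then yields $\bigl|\sum_{i=0}^{k-1}(-2)^{e_i}\bigr|\le 2^{e_{k-1}}+(2^{e_{k-1}}-1)=2^{e_{k-1}+1}-1$, which is exactly the claimed bound, and the lemma follows.

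I expect no real obstacle here: the only point requiring a moment's care is the remark that once the dominant term $(-2)^{e_{k-1}}$ is removed, the remaining exponents are confined to $\{0,\dots,e_{k-1}-1\}$, so their total contribution is dominated by the full geometric series $1+2+\cdots+2^{e_{k-1}-1}=2^{e_{k-1}}-1$ regardless of the $\pm$ signs; that is what closes the estimate. If one prefers an inductive presentation, the same bound follows by an easy induction on $k$ (adding the term of largest exponent each time), but the direct argument above is shorter and is the one I would write.
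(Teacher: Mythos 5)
Your proof is correct and takes essentially the same approach as the paper's: bound the signed sum by the sum of absolute values, use the fact that the exponents are distinct integers at most $e_{k-1}$ to get the geometric-series bound $\le 2^{e_{k-1}+1}-1$, and conclude via $e_{k-1}\le\log_2(p)-1$. The only cosmetic difference is that you peel off the dominant term $(-2)^{e_{k-1}}$ before applying the triangle inequality, whereas the paper applies it to the whole sum at once; both yield the same estimate.
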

\begin{proof}
    We have $\sum_{i=0}^{k-1} (-2)^{e_i} \leq \sum_{i=0}^{k-1} 2^{e_i} < 2^{e_{k-1} + 1} \leq p$. Similarly, $\sum_{i=0}^{k-1} (-2)^{e_i} \geq -\sum_{i=0}^{k-1} 2^{e_i} > -(2^{e_{k-1} + 1}) \geq -p$.
\end{proof}

\begin{theorem}\label{swingcorrectness}
On a 1D torus, if a node $r$ at step $s$ communicates with node $\pi(r, s)$ (defined in Eq.~\ref{eq:pi}), it will reach (directly or indirectly) all the other $p-1$ nodes in $log_2(p)$ steps (with $p$ power of two).
\end{theorem}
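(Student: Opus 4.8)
Write $m := \log_2 p$. The plan is to first reduce the statement to a clean injectivity/counting fact, and then to prove that fact by induction on $m$. For the reduction, recall from the discussion preceding the lemmas that the set of nodes reached (directly or indirectly) by an even node $r$ after the $m$ steps is exactly $\{\,(r + g(S)) \bmod p \,:\, \emptyset \neq S \subseteq \{0,\dots,m-1\}\,\}$, where for $S = \{s_0 < s_1 < \dots < s_{k-1}\}$ we set $g(S) := \sum_{i=0}^{k-1}(-1)^i \rho(s_i)$ (for odd $r$ one replaces $(-1)^i$ by $(-1)^{i+1}$, i.e. $g(S)$ by $-g(S)$, which changes nothing in what follows, cf. Lemma~\ref{evenodd} and Lemma~\ref{oddsteps}). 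There are exactly $2^m - 1 = p - 1$ nonempty such subsets, so if we extend $g$ by $g(\emptyset) := 0$ it suffices to show that $S \mapsto (r + g(S)) \bmod p$ is injective on \emph{all} $2^m$ subsets: its image is then all of $\mathbb{Z}/p\mathbb{Z}$, and since $\emptyset$ is the unique preimage of $r$, the $p-1$ nonempty subsets map bijectively onto the $p-1$ nodes other than $r$. In particular every other node is reached \emph{exactly once}, so no datum is ever aggregated twice. Equivalently, it suffices to prove that $\{\,g(S) \bmod p \,:\, S \subseteq \{0,\dots,m-1\}\,\}$ is a complete residue system modulo $p = 2^m$.

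\textbf{The induction.} I would prove this last claim by induction on $m$, the base case $m = 1$ being immediate ($g(\emptyset) = 0$, $g(\{0\}) = \rho(0) = 1$). For the inductive step, two identities do the work: the recursion $\rho(s) = 1 - 2\rho(s-1)$ for $s \geq 1$, and the split of $S$ according to whether $0 \in S$. Substituting the recursion when $0 \notin S$, and peeling off the $i=0$ term when $0 \in S$, one obtains
$$
g(S) = \begin{cases}
1 - g(S \setminus \{0\}), & 0 \in S,\\[4pt]
\mathbf{1}[\,|S|\text{ odd}\,] - 2\,g(\widetilde{S}), & 0 \notin S,
\end{cases}
$$
where $\widetilde{S} := \{\,s - 1 : s \in S\,\} \subseteq \{0,\dots,m-2\}$ and the indicator appears because $\sum_{i=0}^{k-1}(-1)^i$ is $1$ for $k$ odd and $0$ for $k$ even; note also $S\setminus\{0\} \subseteq \{1,\dots,m-1\}$, so the second line applies to it as well. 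A quick parity check (or induction) gives $g(X) \equiv \mathbf{1}[\,|X|\text{ odd}\,] \pmod 2$, so the indicator equals $g(\widetilde S) \bmod 2$. Feeding in the inductive hypothesis — that $g$ ranges over a complete residue system modulo $2^{m-1}$ as the argument ranges over subsets of $\{0,\dots,m-2\}$ — the subsets with $0 \notin S$ then produce each residue $\equiv 0$ or $3 \pmod 4$ exactly once, and the subsets with $0 \in S$ produce each residue $\equiv 1$ or $2 \pmod 4$ exactly once; together these exhaust $\mathbb{Z}/2^m\mathbb{Z}$. The elementary bound behind Lemma~\ref{lemma:modzero} (namely $|g(S)| \le \sum_{j=0}^{m-1} 2^j < p$) can be used to keep the bookkeeping honest if one prefers to argue with integer representatives rather than residues directly.

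\textbf{Main obstacle.} Everything except the inductive step is routine: the reduction is just unrolling the reachability recursion plus counting, and the base case is trivial. The delicate part is the inductive step — correctly deriving the two-case recursion for $g(S)$ (tracking the alternating signs and the parity indicator through the $0 \in S$ versus $0 \notin S$ split) and then verifying that the two resulting families of residues are disjoint and jointly cover $\mathbb{Z}/2^m\mathbb{Z}$. I expect this residue-class bookkeeping to be where all the care is needed.
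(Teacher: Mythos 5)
Your proof is correct, but it takes a genuinely different route from the paper's. The paper argues by contradiction: it supposes two distinct step sequences reach the same node, expands the resulting congruence modulo $p$ into sums of \emph{distinct} powers of $-2$, invokes Lemma~\ref{lemma:modzero} to confine both sides to the open interval $(-p,p)$ so that congruence forces either exact equality or a difference of exactly $\pm p$, rules out equality by uniqueness of negabinary (base-$(-2)$) representations, and rules out the $\pm p$ case by showing it would require a step index equal to $\log_2 p$, which is out of range. Your proof instead establishes bijectivity of $S\mapsto g(S)\bmod p$ constructively by induction on $m=\log_2 p$, exploiting the recursion $\rho(s)=1-2\rho(s-1)$ and splitting on whether $0\in S$; I checked the two-case identity for $g(S)$, the parity fact $g(X)\equiv|X|\pmod 2$, and the mod-$4$ separation ($0\notin S$ yields residues $\equiv 0,3$; $0\in S$ yields residues $\equiv 1,2$), and the bookkeeping is sound, including the subtle point that $g$ on subsets of $\{1,\dots,m-1\}$ must itself be filtered back through the second line of the recursion to reach the inductive hypothesis on $\{0,\dots,m-2\}$. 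At bottom both arguments rest on the same number-theoretic fact --- injectivity of signed dyadic expansions --- but the paper invokes it globally via a representation-uniqueness argument plus a range bound, while you re-derive it locally, peeling one 2-adic digit per induction step. The paper's route is shorter and reads as a one-shot uniqueness claim; yours is more mechanical but makes the bijection (not merely injectivity on reachable nodes) explicit and exposes the dyadic filtration underlying Swing's swing pattern, which could be useful if one wanted to generalize the peer function.
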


\begin{proof}
We need to prove that, a unique sequence of $k$ steps $\{s_0 < s_1 < \ldots s_{k-1}\}$ exists by which a given node $r$ reaches a node $q$. We prove this by contradiction, and we will prove it by assuming that $r$ is even and $q$ is odd (the proof for the other cases is analogous and only requires changing the signs before the $\rho$ terms).  Assume that there are two different sequences of steps $\{s_0 < s_1 < \ldots s_{k-1} \leq \log_{2}(p) - 1\}$ and $\{t_0 < t_1 < \ldots t_{h-1} \leq \log_{2}(p) - 1\}$ of $k$ and $h$ steps respectively (both $k$ and $h$ are odd from Lemma~\ref{oddsteps}), so that:
\begin{align}\label{eq:newproof:first}
\begin{split}
q &= r + \rho(s_0) - \rho(s_1) + \rho(s_2) - \ldots + \rho(s_{k-1}) \bmod{p} = \\
  &= r + \rho(t_0) - \rho(t_1) + \rho(t_2) - \ldots + \rho(t_{h-1}) \bmod{p} \\
\end{split}
\end{align}

By expanding the first of the two sequences we have:
\begin{align}
\begin{split}
q 
  &= r + \sum_{i=0}^{s_0} (-2)^i - \sum_{i=0}^{s_1} (-2)^i + \ldots + \sum_{i=0}^{s_{k-1}} (-2)^i \bmod{p} \\
  &= r + \sum_{i=0}^{s_0} (-2)^i + \sum_{i=s_1 + 1}^{s_2} (-2)^i + \ldots + \sum_{i=s_{k-2} + 1}^{s_{k-1}} (-2)^i  \bmod{p} \nonumber
\end{split}
\end{align}

By expanding similarly the second assignment in Eq.~\ref{eq:newproof:first}, we have that
the two sequences exist if:
\begin{align}\label{eq:newproof:exp}
    \begin{split}
        & \sum_{i=0}^{s_0} (-2)^i + \ldots + \sum_{\mathclap{i=s_{k-2} + 1}}^{s_{k-1}} (-2)^i \equiv \\ & \equiv 
          \sum_{i=0}^{t_0} (-2)^i + \ldots + \sum_{\mathclap{i=t_{k-2} + 1}}^{t_{h-1}} (-2)^i \Mod{p}
    \end{split}
\end{align}

From Lemma~\ref{lemma:modzero}, we know that both sides are in the range $(-p, p)$. Thus, the two sides are congruent only if: i) they have the same sign and are equal, or; ii) they have different signs, and by summing $p$ on the negative side, we get the positive side. Since each side is the sum of distinct powers of $-2$, case i) is only possible if the two sequences of steps are equal. To prove that case ii) is impossible, let us consider the case where the left side is negative (the other case is analogous). Because $p = 2^a$ for some $a \in \mathbb{N}$, and because $2^a = (-2)^{a}$ (if $a$ is even\footnote{If $a$ is odd, $p = 2^a = (-2)^{a+1} + (-2)^{a}$, and the same considerations still hold.}), Eq.~\ref{eq:newproof:exp} becomes:

\begin{align}
    \begin{split}
    & \sum_{i=0}^{s_0} (-2)^i + \ldots + \sum_{\mathclap{i=s_{k-2} + 1}}^{s_{k-1}} (-2)^i + (-2)^{a} = \\ &=
    \sum_{i=0}^{t_0} (-2)^i + \ldots + \sum_{\mathclap{i=t_{k-2} + 1}}^{t_{h-1}} (-2)^i\nonumber
    \end{split}
\end{align}
However, because both sides are sums of distinct powers of $-2$, they can be equal only if the two sequences of steps are equal, which implies that there must be a $t_i$ such that $t_i = a$. This is impossible because the number of steps can be at most equal to $\log_{2}(p) - 1 = a - 1$. 

We thus proved by contradiction that there are no two sequences of steps leading to the same node and that, at each step, each node reaches only nodes that it did not already reach.
\end{proof}

\subsection{Non-Power of Two Number of Nodes}\label{sec:proof:nonpower}
The correctness proof in Sec.~\ref{sec:proof:power} assumes $p$ is a power of $2$ (needed by the last part of Theorem~\ref{swingcorrectness}). If $p$ is not a power of $2$, the theorem only holds until the second-last step. If $p'$ is the largest power of $2$ smaller than $p$, in the second last step, the data sent by each node reached $p' - 1$ nodes. Thus, we need to guarantee that in the last step: i) no nodes receive data it already received; ii) each node reaches the remaining $p - p' - 1$ nodes. To guarantee property i), it is enough for each node to pre-compute the blocks $b_i$ it will send at each step and if it would send a block twice, send that only in the last step.

To guarantee property ii), it is enough to prove that $\pi(r, s) = \pi(g, s) \Leftrightarrow g = s$. Indeed, if no two nodes reach the same node in the last step, then each node has reached each other node once. First, if $r$ and $g$ are both even, we need to prove that $r + \rho(s) \equiv g + \rho(s) \pmod{p}$. This implies $r - g \equiv 0 \pmod{p}$. However, because we have $r < p$ and $g < p$, this is only possible if $r = g$. The proof for odd $r$ and $g$ is analogous. Then, if $r$ is even and $g$ is odd, we have $r + \rho(s) \equiv g - \rho(s) \pmod{p}$, which implies $2 \rho(s) \equiv g - r \pmod{p}$. However, we know from Lemma~\ref{rhoodd} that $\rho(s)$ is always odd, and thus $2\rho(s)$ is even. Because $r$ is even and $g$ is odd, $g-r$ is odd. Thus, if $p$ is even, this can only hold for $r=g$.

When $p$ is even but not a power of $2$, it is enough for each node not to send the same data block twice, thus not increasing the deficiency compared to the power of two case.

Last, if $p$ is odd, we might have $\pi(r, s) = \pi(g, s)$ even if $r \neq g$, and also Lemma~\ref{evenodd} does not hold anymore. Consequently, some nodes might not reach all the nodes and, at a given step, might receive data from more nodes simultaneously, thus decreasing the performance. For this reason, we run the algorithm on $p-1$ nodes, with the \textit{"odd"} node sending data directly to each of the other nodes.

\end{document}